 \def\bibsep{\smallskipamount}%
\newcommand{\revcolor}[1]{{\color{black}{#1}}}
\def\EMAIL#1{\href{mailto:#1}{#1}}
\algrenewcommand\algorithmicrequire{\textbf{Precondition:}}
\algrenewcommand\algorithmicensure{\textbf{Postcondition:}}
\definecolor{maroon}{rgb}{0.52, 0, 0}
\definecolor{dgreen}{rgb}{0.0, 0.5, 0.0}
\definecolor{ballblue}{rgb}{0.13, 0.67, 0.8}
\definecolor{royalblue(web)}{rgb}{0.25, 0.41, 0.88}
\definecolor{bleudefrance}{rgb}{0.19, 0.55, 0.91}
\definecolor{royalazure}{rgb}{0.0, 0.22, 0.66}
\tikzstyle{vecArrow} = [thick, decoration={markings,mark=at position
\tikzstyle{innerWhite} = [semithick, white,line width=1.4pt, shorten >= 4.5pt]
\providecommand{\given}{}
\DeclarePairedDelimiterX{\card}[1]{\lvert}{\rvert}{\renewcommand\given{\nonscript\:\delimsize\vert\nonscript\:\mathopen{}}#1}
\DeclarePairedDelimiterX{\abs}[1]{\lvert}{\rvert}{\renewcommand\given{\nonscript\:\delimsize\vert\nonscript\:\mathopen{}}#1}
\DeclarePairedDelimiterX{\norm}[1]{\lVert}{\rVert}{\renewcommand\given{\nonscript\:\delimsize\vert\nonscript\:\mathopen{}}#1}
\DeclarePairedDelimiterX{\tuple}[1]{\lparen}{\rparen}{\renewcommand\given{\nonscript\:\delimsize\vert\nonscript\:\mathopen{}}#1}
\DeclarePairedDelimiterX{\parens}[1]{\lparen}{\rparen}{\renewcommand\given{\nonscript\:\delimsize\vert\nonscript\:\mathopen{}}#1}
\DeclarePairedDelimiterX{\brackets}[1]{\lbrack}{\rbrack}{\renewcommand\given{\nonscript\:\delimsize\vert\nonscript\:\mathopen{}}#1}
\DeclarePairedDelimiterX{\set}[1]\{\}{\renewcommand\given{\nonscript\:\delimsize\vert\nonscript\:\mathopen{}}#1}
\let\Pr\relax
\DeclarePairedDelimiterXPP{\Pr}[1]{\mathbb{P}}[]{}{\renewcommand\given{\nonscript\:\delimsize\vert\nonscript\:\mathopen{}}#1}
\DeclarePairedDelimiterXPP{\PrX}[2]{\mathbb{P}_{#1}}[]{}{\renewcommand\given{\nonscript\:\delimsize\vert\nonscript\:\mathopen{}}#2}
\DeclarePairedDelimiterXPP{\Ex}[1]{\mathbb{E}}[]{}{\renewcommand\given{\nonscript\:\delimsize\vert\nonscript\:\mathopen{}}#1}
\DeclarePairedDelimiterXPP{\ExX}[2]{\mathbb{E}_{#1}}[]{}{\renewcommand\given{\nonscript\:\delimsize\vert\nonscript\:\mathopen{}}#2}
\DeclarePairedDelimiterXPP{\1}[1]{\mathds{1}}[]{}{\renewcommand\given{\nonscript\:\delimsize\vert\nonscript\:\mathopen{}}#1}
\newcommand*{\eps}{\epsilon}
\newcommand{\X}{\mathcal{X}}
\newcommand{\Y}{\mathcal{Y}}
\newcommand{\Exv}[2]{\mathbb{E}_{#1}\left[#2\right]}
\newcommand{\V}{\mathcal{V}}
\newcommand{\T}{\tau}
\newcommand{\Z}{\mathcal{Z}}
\newcommand{\Sm}{\mathscr{S}}
\newcommand{\La}{\mathscr{L}}
\newcommand{\Po}{\mathcal{P}}
\newcommand{\opton}{\texttt{OPT-ONLINE}}
\newcommand{\lpon}{\texttt{LP-ONLINE}}
\newcommand{\x}[1]{\mathcal{X}_{#1}}
\newcommand{\Fi}[1]{{F}_{#1}}
\begin{document}


\RUNAUTHOR{Anari et al.}

\RUNTITLE{Linear Programming Based Near-Optimal Pricing for Laminar Bayesian Online Selection}

\TITLE{Linear Programming Based Near-Optimal Pricing for Laminar Bayesian Online Selection}

\ARTICLEAUTHORS{%
\AUTHOR{Nima Anari}
\AFF{Computer Science Department, Stanford University, Stanford, CA, \EMAIL{anari@cs.stanford.edu}}
\AUTHOR{Rad Niazadeh}
\AFF{Booth School of Business, University of Chicago, Chicago, IL, \EMAIL{rad.niazadeh@chicagobooth.edu}}
\AUTHOR{Amin Saberi}
\AFF{Management Science and Engineering, Stanford University, Stanford, CA, \EMAIL{saberi@stanford.edu}}
\AUTHOR{Ali Shameli}
\AFF{Instacart, San Francisco, CA, \EMAIL{ali.shameli@gmail.com }}
} 

\ABSTRACT{%
The Bayesian online selection problem aims to design a pricing scheme for a sequence of arriving buyers that maximizes the expected social welfare (or revenue) subject to different structural constraints. Inspired by applications with a hierarchy of service, this paper focuses on the cases where a laminar matroid characterizes the set of served buyers. We give the first Polynomial-Time Approximation Scheme (PTAS) for the problem when the laminar matroid has constant depth. Our approach is based on rounding the solution of a hierarchy of linear programming relaxations that approximate the optimum online solution with any degree of accuracy, plus a concentration argument showing that rounding incurs a small loss. We also study another variation, which we call the production-constrained problem. The allowable set of served buyers is characterized by a collection of production and shipping constraints that form a particular example of a laminar matroid. \revcolor{Using a similar LP-based approach, we design a PTAS for this problem, although in this special case the depth of the underlying laminar matroid is not necessarily a constant.} The analysis exploits the negative dependency of the optimum selection rule in the lower levels of the laminar family. Finally,  to demonstrate the generality of our technique, we employ the linear programming-based approach employed in the paper to re-derive some of the classic prophet inequalities known in the literature --- as a side result.
}%



\maketitle

\section{Introduction}

This paper revisits a canonical problem in algorithm design: how should a planner allocate a limited number of goods or resources to a set of agents arriving over time? Examples of this  problem range from selling seats in a concert hall to online retail and sponsored-search auctions. In many of these applications, it is often reasonable to assume that each agent has a private valuation drawn from a known distribution. Moreover, the allocation is often subject to combinatorial constraints such as matroids, matchings, or knapsacks. The goal of the planner is  to maximize social-welfare, i.e. the total value of served agents.\footnote{In a single-parameter Bayesian setting like in this paper, the problem of maximizing the revenue can also be reduced to the maximization of welfare with a simple transformation using (ironed) virtual values~\citep{myerson1981optimal}.} This problem, termed as \emph{Bayesian online selection}, originated from the seminal work of \cite{krengel1978semi} and has since been studied quite extensively in probability theory, operations research, and computer science (see \cite{lucier2017economic} for a comprehensive survey).

A common approach to the above stochastic online optimization problem is to obtain \emph{``prophet inequalities''} which  evaluate the performance of an online algorithm relative to an offline ``omniscient  prophet'', who knows the valuation of each agent and therefore can easily maximize the social-welfare.  The upshot of a significant line of work studying prophet inequalities is that in many complex combinatorial settings there exist simple and elegant take-it-or-leave-it pricing rules that obtain a constant factor approximation with respect to the omniscient prophet benchmark. Examples include but are not limited to single-item sale~\citep{samuel1984comparison, hill1982comparisons, correa2017posted}, matroids~\citep{hajiaghayi2007automated,chawla2010multi,kleinberg2012matroid}, matchings~\citep{chawla2010multi, alaei2012online,alaei2014bayesian,gravin2019prophet}, intersections of matroids~\citep{feldman2016online,ma2019algorithms}, and even combinatorial auctions~\citep{feldman2013simultaneous}. Somewhat surprisingly, it is also often possible to prove matching information theoretic lower-bounds e.g. for matroids~\citep{niazadeh2018prophet}.


In this paper, we deviate from the above framework and dig into the question of characterizing and computing {\em optimum online policies}. Given the sequence of value distributions,   Bellman's ``principle of optimality''~\citep{bellman1954theory} proposes a simple dynamic programming that computes the optimum online policy for all of the above problems. Unfortunately, the dynamic program needs to track the full state of the system and therefore it often requires exponential time and space.



While there are fairly strong lower bounds for the closely related computation of Markov Decision Processes (see  \cite{papadimitriou1987complexity} for the PSPACE-hardness of the general Markov decision processes with partial observations), the computational complexity of the stochastic online optimization problems with a concise combinatorial structure, like the one we are considering here, is poorly understood.  \revcolor{Notably,  since the appearance of an early conference version of our paper, the work of \cite{papadimitriou2021online} has established the PSPACE-hardness of Bayesian online matching problem, which is among very few results shedding light on the hardness of structured instances of stochastic online optimization. However, this result does not apply to the laminar matroid Bayesian online selection problem. To the best of our knowledge, there is no formal hardness result for this special class of stochastic online optimization, even for general matroid Bayesian online selection --- and hence establishing any computational complexity hardness for this problem is still open.}
Here,  we ask whether it is possible to approximate the optimum online in polynomial time, and obtain improved approximation factors compared to those derived from the prophet inequalities. If we answer this question in the affirmative,  it justifies the optimum online policy as a less pessimistic benchmark compared to the omniscient prophet benchmark.

\subsection{Our contribution}

We focus on two special cases of the Bayesian online selection problem.  First, we consider the problem of \emph{laminar Bayesian selection}, which is a special case of the well-known matroid Bayesian online selection problem studied in \cite{kleinberg2012matroid}, when the underlying matroid is laminar. In this problem, elements arrive over time with values drawn from heterogeneous but known independent distributions. Laminar matroids are a special case of matroids. A rooted directed tree whose leaves correspond to these elements and has a capacity on each of its internal nodes specifies the laminar matroid feasibility constraints as follows. For every internal node of the tree, a feasible set of elements (a.k.a. an independent set) does not contain more than the capacity of this internal node from the set of leaves that are connected to this node through a directed path in the tree. The depth of this rooted tree represents the depth of our laminar matroid. 

The above constraints can be seen as capturing the limited capacity of the firm in delivering products or services at different geographic levels. The constraint corresponding to the root captures the total capacity of the firm and the ones corresponding to the internal nodes correspond to the capacity of possibly state, region, city, or neighborhood. As a concrete example, suppose the service network of a firm, headquartered in San Francisco (\texttt{SFO}), is as in \Cref{fig:map}. The firm has some capacity at each city. Moreover, delivering a service at a node will take one unit of capacity from each node on the path connecting root to that node. For example, delivering a service at \texttt{BNA} requires a unit of capacity from \texttt{BNA}, \texttt{ORD} and \texttt{SFO}, while delivering a service at \texttt{SEA} only uses a unit of capacity from \texttt{SEA} and \texttt{SFO}. Under this hierarchical structure, service requests with known value distributions arrive at different nodes.


\begin{figure}[ht]
  \centering
  \includegraphics[width=0.8\columnwidth]{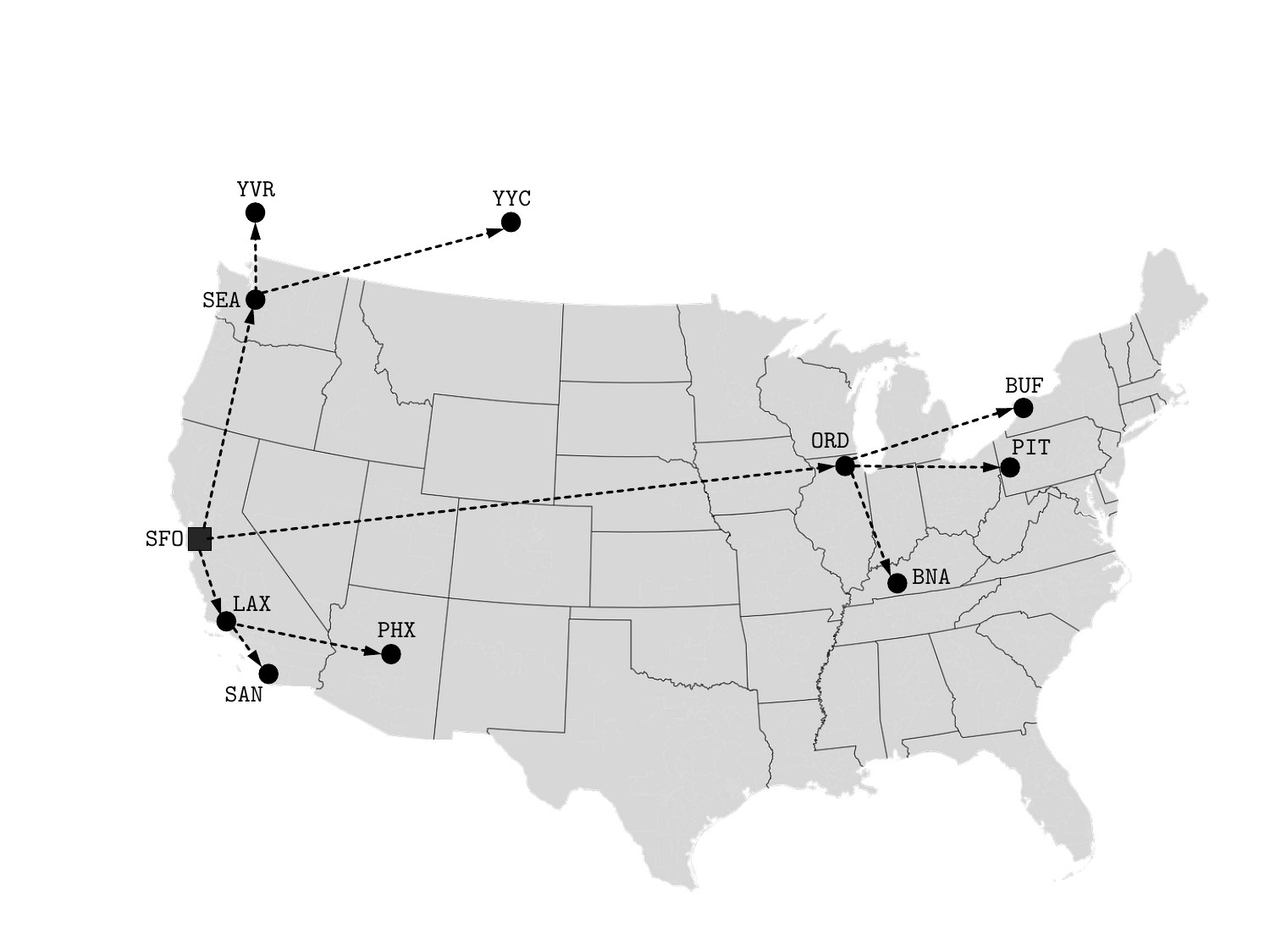}
     \caption{The ``hierarchical'' map of service level network.}
      \label{fig:map}
\end{figure}


We also consider another variation of the laminar Bayesian selection motivated by production and shipping constraints.  
Consider a firm producing  multiple copies of different product types over time. The firm offers the products to arriving unit-demand buyers who are interested in one type of product.  The goal is to maximize social welfare (or revenue) subject to two types of constraints. First, at any time, the total number of sold items of each type is no more than the number of produced items. Second, the total number of items sold does not exceed the total shipping capacity. We term this stochastic online optimization problem as \emph{production constrained Bayesian selection} .


We show that both of the above problems are amenable to polynomial-time approximations with any degrees of accuracy. This is done by introducing linear programming relaxations for these problems and then designing appropriate rounding schemes through pricing.
\vspace{3mm}
\begin{displayquote}
\emph{\textbf{Main Results.} We give Polynomial Time Approximation Schemes (PTAS) for the laminar Bayesian selection problem when the depth of the laminar family is bounded by a constant, as well as the production constrained Bayesian selection problem.}
\end{displayquote}
\vspace{3mm}
Finally, to further showcase the LP based approach employed in the paper, we use it to derive classic prophet inequality results known in the literature for the single-item Bayesian online selection problem~\citep{krengel1978semiamarts,hill1982comparisons,abolhassani2017beating,correa2017posted}. For the case of nonidentical distributions, we introduce a new adaptive pricing policy that obtains $\tfrac{1}{2}$  of the expected value of the prophet. For identical distributions we show that a simple single-price policy obtains $(1-\tfrac{1}{e})$ fraction of that benchmark. 



\subsection{Overview of the techniques} We start by characterizing the optimum online policy for both of the problems through a \emph{Linear Programming} formulation. The LP formulation captures Bellman's dynamic program by tracking the state of the system through allocation and state variables (see \cref{sec:production-simple} for more details) and express the conditions for a policy to be \emph{feasible}  and \emph{online implementable} as linear constraints. Our method for capturing optimum online policy through linear programming resembles the dynamic programming to linear programming conversion technique introduced in \cite{de2003linear}. The resulting LPs are exponentially big but they accept polynomial-sized \emph{relaxations} with a small error. Furthermore, the relaxations can be rounded and implemented as online implementable policies, in the same way as exponential-sized LPs.



More precisely, we propose a hierarchy of linear programming relaxations that systematically strengthen the commonly used ``expected'' LP formulation of  the problem and approximate the optimum solution with any degrees of accuracy. The first level of our LP hierarchy is the expected relaxation, which is a simple linear program requiring that the allocation satisfies the capacity constraint(s) only in expectation. \revcolor{It is well-known that the gap between this LP and the optimum online policy is 2~\citep{duetting2017prophet,alaei2014bayesian}.}  At the other extreme, the linear program is of exponential size and is equivalent to the dynamic program.


Given $\eps$ as the error parameter of the desired PTAS, we show how to choose a linear program that combines the constraints of these two LPs in a careful way to get $\eps$-close to the optimum solution. In a nutshell, this hierarchy is parametrized by how we divide up the capacity constraints into ``large'' and ``small''.  In the laminar Bayesian selection, we consider the tree corresponding to the laminar family of constraints. Our approach here is based on chopping the tree (with the constraints as its internal nodes) by a horizontal cut, and then marking the constraints above the cut as large and below the cut as small (left figure, \Cref{fig:intro}). The final relaxation then needs to respect all the small constraints exactly and all the large constraints only in expectation.

\begin{figure}[ht]
  \centering
  \includegraphics[width=1\columnwidth]{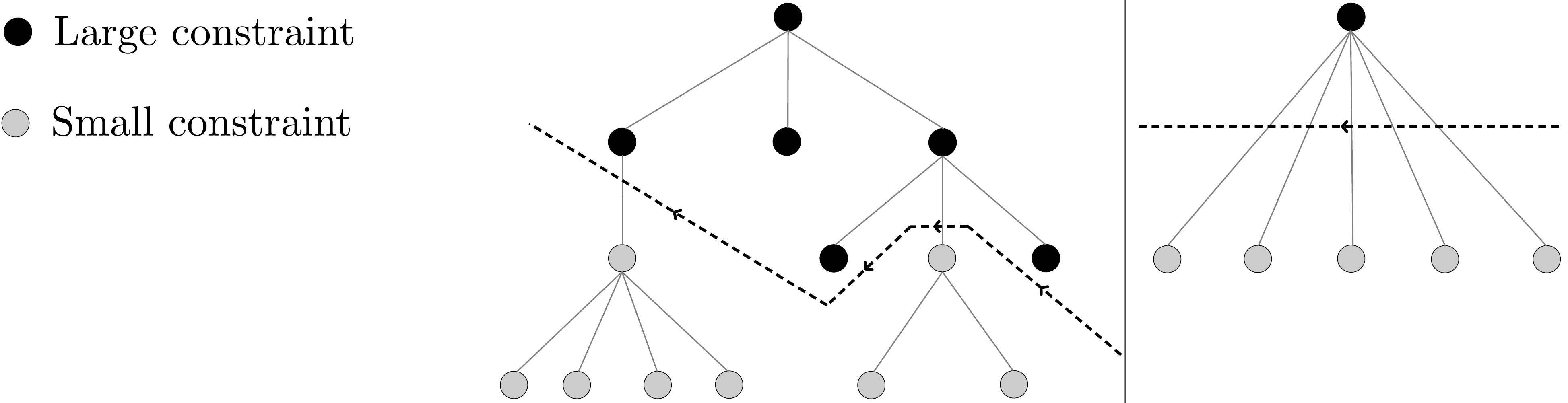}
     \caption{Characterization of our hierarchy of linear programming relaxations.}
     \label{fig:intro}
\end{figure}

Our final algorithms start by reducing the capacities of large bins by a factor of $(1-\eps)$ to create some slack, solve the corresponding LP relaxation, and then adaptively round the solution. A coupling argument shows that the LP solution can be implemented with an adaptive online pricing policy (potentially with randomized tie-breaking). However, the resulting online policy respects the large constraints only in expectation. 


In the production constrained Bayesian selection, we simply consider two cases based on shipping capacity being large or small (right figure, \Cref{fig:intro}). The main technical ingredient in the analysis for this algorithm is to establish a particular form of {\em negative dependency} between the allocation events of this policy. In fact, we show that the event that the optimum online policy makes an allocation decision at each time (for a certain type of buyer) is negatively dependent on the number of allocations made in the past (for the same type of buyer); this in turn leads to concentration results on the number of allocated items in large bins (e.g. see~\citep{dubhashi1998balls}), and shows that the policy only violates the large capacity constraints with a small probability.

The analysis of the above negative dependence uses a very careful argument that essentially establishes the submodularity of the value function of the dynamic program.  See \cref{sec:production-simple} for the details. Surprisingly, the negative dependence property of optimum online policies no longer holds for laminar matroids with arbitrary arrival order of elements. We present examples in which the event that more buyers accept the offered price leads the optimum online to offer a lower price to the next arriving buyer.  In this case, we use a different trick by carefully chopping the laminar tree and marking the constraints to ensure negative dependence. See \Cref{sec:laminar} for the marking algorithm and its analysis.

Finally, as a side result and to demonstrate an \emph{alternative application} of our LP framework, we reinvent some classic results in the prophet inequality literature using the LP technique we had earlier.  To this end, we focus on the classic single-item prophet inequality problem, where the ordering of buyers is unknown, but their values are independently drawn from known distributions. Inspired by the idea of hierarchy of linear programming relaxations used above, we consider two linear programs: expected relaxation, in which the number of sold items in expectation is at most one, and the optimum online LP when the ordering is known. We first solve the expected relaxation (which does not need to know the ordering of the buyers). Then, we show how to modify this optimal solution to be feasible in the optimum online LP. For the case of non-identical distributions, we introduce a modification that only loses $\tfrac{1}{2}$ of the expected objective value, and for the case of identical distributions we introduce a new modification that only loses $\tfrac{1}{e}$ fraction of the expected objective value. Interestingly, the resulting policies are order oblivious and are simple pricing policies (static for identical and adaptive for non-identical distributions) as mentioned earlier.

\subsection{Further related work.} Besides the combinatorial settings mentioned earlier, constraints such as knapsack~\citep{feldman2016online}, $k$-uniform matroids (for better bounds)~\citep{hajiaghayi2007automated,alaei2014bayesian}, or even general downward-closed~\citep{rubinstein2016beyond} have been studied in the literature on prophet inequalities. Moreover, many variations such as prophet inequalities with limited samples form the distributions or inaccurate priors~\citep{azar2014prophet,dutting2019posted,caramanis2021single,nuti2022secretary},  i.i.d. and random order prophets~\citep{hill1982comparisons,esfandiari2017prophet,abolhassani2017beating, correa2017posted, azar2018prophet,correa2019prophet,correa2021prophet}, and free-order prophets~\citep{yan2011mechanism,beyhaghi2018improved} have been explored, and connections to the price of anarchy~\citep{duetting2017prophet}, online contention resolution schemes~\citep{alaei2014bayesian,feldman2016online, lee2018optimal,jiang2022tight,fu2022oblivious}, and online combinatorial optimization~\citep{gobel2014online} have been of particular interest in this literature. Finally, techniques and results in this literature had an immense impact on mechanism design~\citep{chawla2010multi,cai2012optimal,feldman2013simultaneous,babaioff2015simple,cai2016duality,chawla2016mechanism}. For a full list, refer to \citep{lucier2017economic}.

Stochastic optimization problems with similar flavors, either online or offline, have also been massively studied both in the operations research and the computer science literature. Examples include (but not limited to) stochastic knapsack~\citep{dean2004approximating,bhalgat2011improved,ma2014improvements}, online stochastic matching~\citep{manshadi2012online,jaillet2014online,huang2021online}, online matching with stochastic rewards~\citep{goyal2022online,mehta2012online}, stochastic assortment optimization and pricing~\citep{goyal2016near,rusmevichientong2017dynamic, ma2018dynamic,feng2019linear}, stochastic probing~\citep{chen2009approximating,gupta2016algorithms}, and pre-planning in stochastic optimization~\citep{immorlica2004costs}.   There are also other papers that study computational questions related to prophet inequalities.  For example,  \cite{agrawal2020optimal} study the optimal ordering problem in free-order prophet inequalities and establishes its NP-hardness, and the work of \cite{fu2018ptas} obtains a PTAS for this problem.  The closest work in the operations research  literature to our paper is~\cite{halman2014fully}. This papers also obtain a PTAS for some specific stochastic dynamic program similar to the Bayesian online allocation; however all these papers diverge from our treatment both in terms of techniques, results, and the category of the problems they can solve. 

Since an early conference version of our work\citep{anari2019nearly}, there has been a growing line of research on studying the optimum online benchmark in the Bayesian online allocation,  which is the same type of benchmark we consider in this paper.\cite{papadimitriou2021online} studies a slight variant of the matching prophet inequality problem, establishes PSPACE-hardness of computing optimum online, and obtains improved competitive ratios with respect to the optimum online benchmark.  There are also a limited number of other recent papers that consider competing with the optimum online benchmark in other combinatorial settings related to prophet inequalities~ \citep{ezra2022significance,braverman2022max}.

Finally, our work can also be considered as part of the rich literature on dynamic pricing in revenue management with inventory constraints. The classic work of \cite{gallego1994optimal} initiated the study of dynamic pricing with a given number of copies of the item to sell (limited supply), when the demand is stochastic (with known distribution) and price sensitive. Dynamic pricing in the i.i.d. stochastic setting when the demand distribution is unknown is also well studied~\citep[e.g., see][]{besbes2009dynamic,babaioff2012dynamic,babaioff2015dynamic}. See \cite{bitran2003overview} for different pricing models, and \cite{den2015dynamic} for a comprehensive survey on more recent results. Our work diverges from all above by considering the more complex combinatorial constraint of laminar matroid versus the limited supply.
Other indirectly related lines of work are bandits with knapsacks~\citep{badanidiyuru2013bandits,agrawal2016linear,immorlica2018adversarial},   online packing LP and convex optimization~\citep{devanur2011near,agrawal2014dynamic, agrawal2014fast,besbes2015non,li2022online,balseiro2023best}, the line of work on Bayesian prophet and low-regret framework for Bayesian online decision making~\citep{vera2018bayesian,vera2021online,banerjee2020uniform,kerimov2021dynamic}, and the growing literature on dynamic auctions and mechanism design~\citep{vulcano2002optimal,aviv2008optimal,gallien2006dynamic,balseiro2017dynamic}. Our paper diverges from all of these papers in terms of problem formulation and the underlying technical framework, and hence our results are not mathematically comparable to similar-in-spirit results in these papers.

\subsection{Organization}

The rest of the paper is organized as follows. In \Cref{sec:laminar} we introduce the laminar matroid Bayesian selection problem and provide a PTAS for the case where the depth of the laminar matroid is constant. In \cref{sec:production-simple}, we formalize the production constrained Bayesian selection problem (which is a special case of the setting in \Cref{sec:laminar}) and show how we can leverage the structure of this problem to go beyond constant depth. We further showcase the applications of our linear programming based techniques for the single-item prophet inequality problem in Appendix~\ref{sec:lp-prophet}. Finally, we have the concluding remarks and future directions, along with a list of open questions, in \Cref{sec:conclusion}.

\section{Laminar Matroid Bayesian Online Selection}
\newcommand{\lamfam}{\mathscr{F}}
\label{sec:laminar}
The goal of this section is to first introduce \emph{laminar matroid Bayesian selection} \citep{kleinberg2012matroid,feldman2016online}, and then propose a PTAS for the optimal online policy for maximizing social-welfare. On our way to achieve this goal, we will discuss an exponential-sized dynamic program and how it can be written as a linear program. We further relax this linear program to be able to solve it in polynomial time and then explore how it can be rounded to a feasible online policy without a considerable loss in expected social-welfare. The combination of these two ideas gives us our first polynomial time approximation scheme.

\subsection{Problem description}

The laminar matroid Bayesian selection is a special case of the well-known matroid Bayesian online selection problem studied in \cite{kleinberg2012matroid}. In this setting, we have a sequence of $n$ elements that arrive over time in an arbitrary but known order. \revcolor{Just before arrival, each element reveals its value.} We assume that the values are drawn independently from known heterogeneous distributions. More precisely, we assume the value of the element arriving at time $t$ is drawn from distribution $F_t$. Throughout this section, in order to have a succinct representation of the input for running time purposes, we focus on atomic distributions.

The goal is to design an online algorithm for picking a subset of these arriving elements that maximizes the expected social welfare, i.e. the expected sum of the values corresponding to the picked elements. Upon the arrival of each element, and after observing its value, the online algorithm needs to make an irrevocable decision about whether to pick or ignore the element. At the end, we want the set of picked elements to be feasible. The collection of feasible subsets are characterized by a given matroid $\mathcal{M}$.

In this paper, we consider the case where the feasible subsets are given by a special case of matroids called laminar matroids. More precisely, denote the set of all element by $E$. Consider a laminar family of subsets over these elements, i.e. a collection $\lamfam$ of subsets, termed as \emph{bins}, where for every $B,B'\in\lamfam$ either $B\subseteq B'$, $B'\subseteq B$ or $B\cap B'=\emptyset$. Each bin $B\in \lamfam$ has a capacity $k_B$ and we say a set $S\subseteq E$ is feasible if for each $B\in \lamfam$, we have $|S\cap B|\leq k_B$. It is often helpful to represent the laminar family as a rooted tree whose internal nodes are the bins and the leaves are the elements (without loss of generality, we assume the graph corresponding to our laminar family is a tree with a root corresponding to the largest set in $\lamfam$. Otherwise we can decompose the problem into smaller and independent subproblems each of which has this property). The depth of this tree represents the depth of our laminar matroid.

Throughout the paper, we will focus on characterizing the optimal \emph{online} policy and will evaluate our algorithms against that benchmark. In that sense, we deviate from the prophet inequality framework that compares various policies against the optimum offline. It is not hard to see that these two benchmarks could be off by a factor 2 of each other even for the special case of single item prophet inequality (see also \cite{kleinberg2012matroid}). Our main result in this section is a PTAS for the optimal online policy, when the depth of the family (or equivalently the height of the tree) is constant. We also show that our final algorithm has the form of an adaptive pricing with randomized tie-breaking. 

\subsection{Sketch of our approach}
\label{sec:approach}

One key idea in our approach is to mark each bin in our laminar family as either \emph{large} or \emph{small} and then treat each group differently in our analysis. 
We proceed with the following steps:
\begin{enumerate}
\item Finding a linear programming formulation (with exponential size) for characterizing the optimum online policy for this problem.
\item Developing a family of linear programming relaxations for the laminar matroid Bayesian selection problem. This family of relaxations is parametrized by how we \emph{mark} bins as large or small; we enforce the small bin capacities to be respected point-wise, while we allow the large bin capacities to hold in expectation. In this way, we essentially create a hierarchy of LP relaxations, where at the top of the hierarchy we have the \emph{expected relaxation}, a relaxation where all the capacities are allowed to be satisfied only in expectation, and at the bottom of the hierarchy we have an LP characterization of the optimum online policy. Importantly, all these linear programs can be solved up-front (i.e., offline); however they might not be solvable in polynomial time (see \cref{sec:lpformulation}).
\item Designing an adaptive pricing with randomized tie breaking policy to round the solution of any given such LP relaxation. We show the expected welfare of our policy is equal to the objective value of the particular LP relaxation it has started with, and so it is a lossless randomized rounding. Further, as expected, this solution respects all the small bin capacities of the LP relaxation point-wise and all the large bin capacities only in expectation. 
\item Presenting a particular \emph{marking algorithm} to select a polynomially solvable linear programming relaxation in the above mentioned hierarchy.
\item Using a concentration argument to show that the constraints corresponding to large bins are violated with only a small probability.
\end{enumerate}
We next elaborate more on each of the bullets above.

\subsection{Linear programming formulation of the optimum online policy}
\label{sec:lpformulation}
Our laminar matroid Bayesian selection problem can be solved exactly using a simple exponential-sized dynamic program. Let $\vec{s}\in \mathbb{Z}^{\lamfam}$ be the vector representing the number of picked elements in each bin of $\lamfam$. We say $\vec{s}$ is a \emph{feasible state} at time $t$ if it can be reached at time $t$ by a feasible online policy respecting the capacity constraints of all bins.

Define $\V_t(\vec{s})$ to be the maximum total expected welfare that an online policy can obtain from time $t$ to time $n$ given $\vec{s}$. Define $\V_{t}(\vec{s}) = -\infty$ when $s$ is not feasible at time $t$ and $\V_{n+1}(\vec{s}) = 0$ for all $\vec{s}$.  We can compute $\V_t(s)$ for the remaining values of $s$ and $t$ recursively as follows. At time $t$, the policy offers the buyer the price $\tau = \tau_t(\vec{s})$. Depending on whether or not the value of the customer is above $\tau$, the mechanism  obtains either  $v_t+\V_{t+1}(\vec{s}+\vec{d}_{t})$ or $\V_{t+1}(\vec{s})$, where $\vec{d}_t\in \{0,1\}^{\lamfam}$ is a binary vector denoting which bins in $\lamfam$ will be used if we pick the element arriving at time $t$\footnote{i.e. for every $B\in \lamfam$, $\vec{d}_t(B)=1$ if and only if $t\in B$.}. The probability of each event can be computed using the distribution of the value of element $t$.  Therefore, the dynamic programming table can be computed using the following rule also known as the \emph{Bellman equation}:
\begin{equation}
\label{eq:bellman}
\V_{t}(\vec{s})= \max_\tau \parens*{\ExX{v_t\sim F_t}{\parens*{v_t+\V_{t+1}(\vec{s}+\vec{d}_{t})}\cdot \1{v_t\geq \tau}}+\ExX{v_t\sim F_t}{\V_{t+1}(\vec{s})\cdot \1{v_t<\tau}}}.
\end{equation}

Note that the price $\tau_t(\vec{s})= \V_{t+1}(\vec{s})-\V_{t+1}(\vec{s}+\vec{d}_{t})$ maximizes the above equation, and so the final prices of an optimal online policy can be computed easily given the table values.  

The above dynamic program has an exponentially large table.  In the rest of this section, we describe a linear programming formulation equivalent to the above dynamic program, a natural relaxation for the LP, and a randomized rounding of the relaxation that yields a PTAS when the depth of the laminar family is constant.

\label{sec:lp-optimal}

An online policy can be fully described by \emph{allocation variables} $\X_{t}(\vec{s},v)$, where for every time $t$ and state $\vec{s}$, $\X_{t}(\vec{s},v)$ represents the probability of the event that the element arriving at time $t$ is picked and the state upon its arrival is $\vec{s}$, conditioned on $v_t=v$. We further use  \emph{state variables} $\Y_{t}(\vec{s})$ to represent the probability of the event that an online policy reaches the state $\vec{s}$ upon the arrival of the element at time $t$, and auxiliary variables $\X_{t}(v)$ for the marginal probability of picking the element arriving at time $t$ conditioned on $v_t=v$. Using these variables we can write the optimum online policy as a linear program. A similar formulation has been used in \citet{niazadeh2018prophet} to characterize the optimum online policy.

Having this description, the LP formulation of the above dynamic program is a combination of two new ideas. \revcolor{The first idea is to ensure the feasibility of the policy by adding the constraint that $\Exv{v_{t-1}}{\X_{t-1}(\vec{s}, v_{t-1})}=0$ for any feasible state $\vec{s}$ at any time $t-1$ in which by an allocation at time $t-1$ we lead to an infeasible state at time $t$. This, along with starting from a feasible state, will automatically ensure $\Y_{t}(\vec{s})=0$ for any infeasible state $\vec{s}$ at any time $t$.} The second idea is to add another constraint describing how the probability $\Y_{t}(\vec{s})$ updates from time $t$ to $t+1$ as the result of the probabilistic decision made by the policy at time $t$. As will be elaborated more later, this constraint is the necessary and sufficient condition for any policy to be implementable in an online fashion.

Let the set $\mathcal{S}\subset \mathbb{Z}^{\lamfam}$ be a finite set containing all possible feasible states at any time $t$.\footnote{For the ease of exposition, we do not consider time-specific state spaces. In particular, let $\mathcal{S}$ to be the set of all possible states that can happen by picking a subset of elements of size at most $K$. This set contains $O(n^K)$ states, where at any time $t$ only a subset of them are actually reachable.} Consider the following exponential-sized (both in the number of variables and constraints) linear program: 

 



 
\begin{equation*}
\label{eq:lp-optimal}\tag{LP$_1$}
\begin{array}{ll@{}ll}
\text{maximize}  & \displaystyle \sum_{t=1}^{n}\Exv{v_t}{v_t\cdot\X_t(v_t)} &\\
\text{subject to}& \{\X_t(\vec{s},v),\X_t(v),\Y_t(\vec{s})\}\in \Po^{\textrm{opt}}~,
\end{array}
\end{equation*}
where $\Po^{\textrm{opt}}$ is the polytope of \emph{point-wise feasible online policies}, defined by these linear constraints:
\begin{equation*}
\begin{array}{ll@{}ll}
& \X_t(v)=\displaystyle\sum_{\vec{s}\in \mathcal{S}}\X_{t}(\vec{s},v) &~~\forall v,~ t=1,2, \ldots, n,\\ \\
 &0 \leq \X_{t}(\vec{s},v)\leq \Y_{t}(\vec{s})&~~\forall v,~\vec{s}\in\mathcal{S}, t=1,2, \ldots, n, \\
&\Y_{t+1}(\vec{s})=\Y_{t}({\vec{s}})-\Exv{v_t}{\X_{t}(\vec{s}, v_t)}+\Exv{v_t}{\X_{t}(\vec{s}-\vec{d}_{t}, v_t)}&~~\forall \vec{s}\in \mathcal{S},~ t=1, \ldots, n&~({\emph{\textrm{state update}}})  \\ \\
 & \revcolor{\Y_1(\vec{0})\leq 1~,~\Y_1(\vec{s})=0} &~~\revcolor{\forall \vec{s}\in\mathcal{S}\setminus\{\vec{0}\}} \\
&\revcolor{{\X_{t}(\vec{s}, v})=0}&\revcolor{~~\forall v, t=1,\ldots,n, ~\vec{s}\in \mathcal{S}: \vec{s}+\vec{d}_t\in \partial \mathcal{S}}&\revcolor{~({\emph{\textrm{feasibility check}}})}
\end{array}
\end{equation*}
where, as a reminder, $\vec{d}_t\in \{0,1\}^{\lamfam}$ is a binary vector denoting which bins in $\lamfam$ will be used if we pick the element arriving at time $t$. We also use $\partial \mathcal{S}$ to denote the set of all \emph{forbidden neighboring states} at time $t$. 
$$\partial S\triangleq\left \{\vec{s}\in \mathbb{Z}^{\lamfam}: \left[\textrm{$\vec{s}$ is not a feasible state}\right]~\&~\left[\exists t ~\textrm{s.t.}~\vec{s}-\vec{d}_{t}\in \mathcal{S}\right]\right\}$$

It is also not hard to see that any feasible online policy induces a feasible assignment for the linear program (\ref{eq:lp-optimal}). The only tricky constraint to check is the constraint corresponding to the ``state update''. To do so, note that the online policy will reach the state $\vec{s}$ at time $t+1$  if and only if either the state at time $t$ is $\vec{s}$ and the element arriving at time $t$ is not picked, or the state at time $t$ is $\vec{s}-\vec{d}_{t}$ and the element arriving at time $t$ gets picked, evolving the state from $\vec{s}-\vec{d}_{t}$ to $\vec{s}-\vec{d}_{t}+\vec{d}_{t}=\vec{s}$. 

More importantly, we show the converse holds by proposing an exact rounding algorithm in the form of an \emph{adaptive pricing with randomized tie-breaking} policy; such a policy sets a price $\T_t(\vec{s})$ for the element arriving at time $t$ if the current state is $\vec{s}$. In case of a tie ($v_t=\T_t(\vec{s})$), the pricing policy breaks the tie independently with probability $p_t(\vec{s})$, in favor of selling the item.


\begin{proposition}
\label{prop:adaptive-pricing}
There exists an adaptive pricing policy with randomized tie breaking, whose expected social-welfare is equal to the optimal solution of the linear program~(\ref{eq:lp-optimal}) and is a feasible online policy for the laminar matroid Bayesian selection problem.
\end{proposition}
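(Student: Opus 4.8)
The plan is to extract an online policy from an optimal solution $\{\X_t(\vec s,v),\X_t(v),\Y_t(\vec s)\}$ of~(\ref{eq:lp-optimal}), after first massaging it into threshold form. The key observation is that the state-update constraints, the feasibility checks, the boundary condition $\Y_1(\vec 0)=1$, and the box constraints $0\le\X_t(\vec s,v)\le\Y_t(\vec s)$ touch the conditional allocations only through the $v$-averages $\Exv{v_t}{\X_t(\vec s,v_t)}$ together with the $\Y$-variables, whereas the objective, rewritten as $\sum_{t}\sum_{\vec s\in\mathcal S}\Exv{v_t}{v_t\cdot\X_t(\vec s,v_t)}$, only rewards placing allocation mass on the largest realizations of $v_t$. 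Hence, holding the $\Y$-variables fixed, I would replace, independently for each pair $(t,\vec s)$ with $\Y_t(\vec s)>0$, the function $\X_t(\vec s,\cdot)$ by the threshold-with-atom rule $v\mapsto\Y_t(\vec s)\bigl(\1{v>\T_t(\vec s)}+p_t(\vec s)\,\1{v=\T_t(\vec s)}\bigr)$ with $\T_t(\vec s)\in[0,\infty]$ and $p_t(\vec s)\in[0,1]$ chosen so that $\Exv{v_t}{\X_t(\vec s,v_t)}$ is unchanged; such a choice exists because $v\mapsto\Pr{v_t>v}$ is non-increasing and right-continuous, so $\Y_t(\vec s)\bigl(\Pr{v_t>\tau}+p\,\Pr{v_t=\tau}\bigr)$ sweeps out all of $[0,\Y_t(\vec s)]$ as $(\tau,p)$ varies. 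A standard rearrangement (``bathtub'') inequality shows this substitution does not decrease the objective — the difference between the new and old term equals $\Exv{v_t}{(v_t-\T_t(\vec s))\cdot(\text{new}-\text{old})}$, whose integrand is pointwise nonnegative. Starting from an optimum, the objective therefore remains the optimum of~(\ref{eq:lp-optimal}), and since every $\Exv{v_t}{\X_t(\vec s,v_t)}$ and every $\Y_t(\vec s)$ is left intact, LP-feasibility is preserved. For $(t,\vec s)$ with $\Y_t(\vec s)=0$ I set $\T_t(\vec s)=+\infty$.

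With this threshold solution in hand, I would define the candidate policy: track the running state $S_t\in\mathcal S$ recording how many items of each type have been sold before time $t$, and at time $t$ offer buyer $t$ the take-it-or-leave-it price $\T_t(S_t)$, serving when $v_t>\T_t(S_t)$, serving with probability $p_t(S_t)$ (via an independent coin) when $v_t=\T_t(S_t)$, and not serving when $v_t<\T_t(S_t)$. This is precisely an adaptive pricing policy with randomized tie-breaking. The crux is the invariant $\Pr{S_t=\vec s}=\Y_t(\vec s)$ for all $t$ and all $\vec s\in\mathcal S$, which I would prove by induction on $t$. The base case is the boundary condition $\Y_1(\vec 0)=1$. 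For the step, condition on $v_t$: by construction, given $S_t=\vec s$ and $v_t=v$ the policy serves buyer $t$ with conditional probability $\X_t(\vec s,v)/\Y_t(\vec s)$ (any value when $\Y_t(\vec s)=0$, since that state has probability $0$ by the hypothesis), so $S_{t+1}=\vec s$ occurs exactly when either $S_t=\vec s$ and the buyer is not served, or $S_t=\vec s-\vec e_{j_t}$ and the buyer is served; taking expectations over $v_t$ and invoking the induction hypothesis collapses this to the state-update equation $\Y_{t+1}(\vec s)=\Y_t(\vec s)-\Exv{v_t}{\X_t(\vec s,v_t)}+\Exv{v_t}{\X_t(\vec s-\vec e_{j_t},v_t)}$, completing the induction.

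Given the invariant, the two remaining claims follow directly. Feasibility: the feasibility-check constraints force $\Y_t(\vec s)=0$ for every $\vec s\in\partial\mathcal S(t)$, so by the invariant $S_t$ almost surely never steps into a forbidden neighboring state; since each transition moves the state by a single basis vector, $S_t$ stays in $\mathcal S$ at every time, meaning every production constraint and the shipping constraint hold with probability one. Welfare: conditioned on $v_t=v$, the probability that buyer $t$ is served is $\sum_{\vec s}\Pr{S_t=\vec s}\cdot\X_t(\vec s,v)/\Y_t(\vec s)=\sum_{\vec s}\X_t(\vec s,v)=\X_t(v)$, so the expected value collected at time $t$ is $\Exv{v_t}{v_t\cdot\X_t(v_t)}$, and summing over $t$ gives total expected welfare $\sum_{t=1}^n\Exv{v_t}{v_t\cdot\X_t(v_t)}$, i.e. the optimum of~(\ref{eq:lp-optimal}).

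I expect the first step — passing to a threshold-structured optimal solution — to be the main obstacle to write carefully. One has to handle value distributions $F_t$ with atoms, which is exactly why the randomized tie-breaking probability $p_t(\vec s)$ is needed rather than a pure threshold, and one has to check that performing the substitution simultaneously over all $(t,\vec s)$ causes no trouble; it does not, because all substitutions are carried out with the $\Y$-variables frozen and the rearrangement inequality applies separately to each term $\Exv{v_t}{v_t\cdot\X_t(\vec s,v_t)}$ of the expanded objective. Everything after that — the inductive coupling and the feasibility/welfare bookkeeping — is routine.
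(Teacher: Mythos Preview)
Your proposal is correct and follows essentially the same approach as the paper: both arguments couple an online policy to the optimal LP solution via the induction $\Pr{S_t=\vec s}=\Y_t(\vec s)$ driven by the state-update constraint, and both obtain the threshold structure by observing that, with the $\Y$-variables and the averages $\Exv{v_t}{\X_t(\vec s,v_t)}$ held fixed, shifting allocation mass to the largest values can only help the objective. The only cosmetic difference is ordering---you massage the LP optimum into threshold form first and then run the policy, whereas the paper first runs the generic randomized rounding and then argues by contradiction that optimality forces threshold form---but the content is the same.
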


We postpone the formal proof and a discussion on how to compute prices and tie-breaking probabilities (given the optimal solution to LP) to \cref{sec:proof-of-prop1} and just sketch the main ideas here. 

\proof{Proof sketch.} Let $\{\X^*_t(\vec{s},v)\}$ and $\{\Y^*_t(\vec{s})\}$ be the optimal solutions of \ref{eq:lp-optimal}. Consider the following simple online randomized rounding scheme: start from the all-zero assignment at time $t<1$. Now, suppose at time $t\geq 1$, the current state, i.e., number of sold products of different types, is $\vec{s}$ and the realized value of the arriving element is $v_t=v$. The rounding algorithm first checks whether $\Y^*_t(\vec{s})$ is zero. If yes, it skips the element. Otherwise, it picks the element with probability $ \tfrac{\X^*_t(\vec{s},v)}{\Y^*_t(\vec{s})}$. 

It is not hard to show this simple scheme will have allocation and state probabilities matching the LP optimal assignment, i.e. $\{\X^*_t(\vec{s},v)\}$ and $\{\Y^*_t(\vec{s})\}$. Moreover, $\Y^*_t(\vec{s})=0$ for all forbidden neighboring states $\vec{s}$, i.e. infeasible states that can only be reached from a feasible state at time $t$ by accepting an extra request. Hence an inductive argument shows that the resulting online policy is always feasible. There is also a simple coupling argument, with shifting the probability masses to higher values, showing that the above algorithm can be implemented using an adaptive pricing policy with the randomized tie breaking. Prices and probabilities can then be computed by straightforward calculations. \qed
\endproof

\subsection{A hierarchy of linear programming relaxations for general laminar matroids}
\label{sec:lp-hierarchy-laminar}
We define a family of linear programming relaxations, parametrized by different markings of bins in small and large. \revcolor{It is important to note that our marking is hereditary, meaning that we mark bins in a way that the child of a small bin is always small and the parent of a large bin is always large. Given a particular feasible marking as described (see also \cref{sec:approach})}, let $\La$ be the set of large bins and $\Sm$ be the set of maximal small bins.


We can compute the optimum online policy using the (exponential time) linear program~(\ref{eq:lp-optimal}). To avoid exponentially many states in our hierarchy of LP relaxations, we use the same state-space structure, but we only track the \emph{local state} of maximal small bins in $\Sm$ separately. In other words, we can think of each maximal small bin $B$ as a separate laminar matroid Bayesian selection sub-problem with laminar family $\lamfam^B\triangleq \{B'\in\lamfam:B'\subseteq B\}$, where each arriving element is only in one of the sub-problems (because subsets in $\Sm$ form a partition of the set of all elements). Now, if the arriving element at time $t$ belongs to $B\in\Sm$, the linear program only needs to keep track of the change in the local state $\vec{s}\in\mathbb{Z}^{\lamfam^B}$ of the sub-problem $B$, i.e. the vector representing the number of picked elements of each bin in $\lamfam^B$.

For every small bin $B$, define $\mathcal{S}^B$ to be the set of all \emph{feasible local states} of the sub-problem $B$, i.e. the set of all possible states that can be reached by an online policy for this sub-problem that respects all the capacities in $\lamfam^B$. Note that $\lvert\mathcal{S}^B\rvert \leq n^{k_B}$, because no feasible online policy for the sub-problem $B$ can pick more than $k_B$ elements. We now can write a linear program with the following variables and constraints:

\paragraph{\textbf{Variables.}}We add \emph{allocation variables} $\X_t(\vec{s},v)$, \emph{marginal allocation variables} $\X_t(v)$ and \emph{state variables} $\Y_t(\vec{s})$ as before. For the variables $\X_t(\vec{s},v)$ and $\Y_t(\vec{s})$, assuming the element arriving at time $t$ belongs to the maximal small bin $B$, the vector $\vec{s}$  represents the local state of $B$ right before arrival of this element.
\paragraph{\textbf{Constraints.}} We add two categories of linear constraints to our LP relaxations:
\begin{itemize}
\item \emph{Global expected constraints}: these constraints ensure that the capacity of all large bins are respected in expectation, i.e.
\begin{equation*}
\forall B\in \La: \displaystyle \sum_{t\in B}\Exv{v_t}{\X_t(v_t)}\leq k_B
\end{equation*} 
\item \emph{Local online feasibility constraints:} for every bin $B\in \Sm$, similar to \ref{eq:lp-exante}, we can define a polytope $\Po^B$ of feasible online policies that ensures a feasible assignment of the linear program is online implementable by a feasible policy. So, these constraints will be:
\begin{equation*}
\forall B\in\Sm: \{\X_t(\vec{s},v), \X_t(v), \Y_t(\vec{s})\}\in \Po^B
\end{equation*}
\end{itemize}
\paragraph{Polytope of feasible online policies.}
The polytope $\Po^B$ is defined using exactly the same style of linear constraints as in \Cref{sec:lpformulation} :
\begin{equation*}
\begin{array}{ll@{}ll}
& \X_t(v)=\displaystyle\sum_{\vec{s}\in \mathcal{S}^B}\X_{t}(\vec{s},v) &~~~~\forall v,~ t\in B,\\ \\
                                 
  &0 \leq \X_{t}(\vec{s},v)\leq \Y_{t}(\vec{s})&~~~~\forall v,~\vec{s}\in\mathcal{S}^B, t\in B, \\
&\Y_{t}(\vec{s})=\Y_{t'}({\vec{s}})-\Exv{v_{t'}}{\X_{t'}(\vec{s}, v_{t'})}+\Exv{v_{t'}}{\X_{t'}(\vec{s}-\vec{d}_{t'}, v_{t'})}&~~~~\forall \vec{s}\in \mathcal{S}^B,~ t,t'\in B,~~~~~~~~~~~~~~~~~~({\emph{\textrm{state update}}})\\ 
&&~~~~[t'+1:t-1]\cap B=\emptyset \\ \\
& \revcolor{\Y_{t_0}(\vec{0})\leq 1~,~\Y_{t_0}(\vec{s})=0}&\revcolor{~~~~\forall \vec{s}\in\mathcal{S}^B\setminus\{\vec{0}\}~,~ t_0=\min\{t\in B\}}\\
&\revcolor{{\X_{t}(\vec{s}, v)}=0}&\revcolor{~~~~\forall v,~\vec{s}\in \mathcal{S}^B: \vec{s}+\vec{d}_t\in \partial \mathcal{S}^B, t\in B~({\emph{\textrm{feasibility check}}})}
\end{array}
\end{equation*}
where $\vec{d}_t\in \{0,1\}^{\lamfam^B}$ is a binary vector denoting which bins in $\lamfam^B$ will be used if we pick the element arriving at time $t\in B$, and $\partial \mathcal{S}^B$ is the set of all \emph{forbidden neighboring states} of sub-problem $B$, i.e. 
$$\partial S^B\triangleq \{\vec{s}\in \mathbb{Z}^{\lamfam^B}: \left[\textrm{$\vec{s}$ is an infeasible local state}\right]~~\&~~\left[\exists t\in B,~\textrm{s.t.}~ \vec{s}-d_t\in \mathcal{S}^B\right] \}.$$

It is easy to see that the set $\partial \mathcal{S}^B$ has at most $O(n^{k_B+1})$ states. Given these variables and constraints, the LP relaxation corresponding to the marking $(\Sm,\La)$ (which we show in \Cref{prop:lp-hierarchy-relaxation-laminar} why is actually a relaxation) can be written down as follows.

\begin{equation*}
\label{eq:lp-hierarchy-laminar}
\begin{array}{ll@{}ll}
\text{maximize}  & \displaystyle \sum_{t=1}^{n}\Exv{v_t}{v_t\cdot\X_t(v_t)}&\tag{LP$_2$}&\\
\text{subject to}&  \displaystyle \sum_{t\in B}\Exv{v_t}{\X_t(v_t)}\leq k_B&~~~~\forall B\in \La&\textit{(Global expected constraints)}, \\
& \\
														& \{\X_t(\vec{s},t),\Y_t(\vec{s}),\X_t(v)\}\in \Po^B &~~~~\forall B\in\Sm&\textit{(Local online feasibility constraints).} 
\end{array}
\end{equation*}

Again, it is easy to see that any feasible online policy for the sub-problem $B\in \Sm$ is represented by a feasible point inside the polytope $\Po^B$. As every online policy for the laminar matroid Bayesian selection problem induces a feasible online policy for each sub-problem $B\in\Sm$ (by simulating the randomness of the policy and values outside of $B$), and because it respects all the large bin capacity constraints point-wise, we have the following proposition.


\begin{proposition}
\label{prop:lp-hierarchy-relaxation-laminar}
For any marking $(\Sm,\La)$ of the laminar tree, \ref{eq:lp-hierarchy-laminar} is a relaxation of the optimal online policy for maximizing expected social-welfare in the laminar matroid Bayesian selection problem.
\end{proposition}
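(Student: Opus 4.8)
The plan is to exhibit, for every feasible (possibly randomized) online policy $\pi$ for the laminar matroid Bayesian selection problem, a feasible point of \ref{eq:lp-hierarchy-laminar} whose objective equals the expected social-welfare of $\pi$. Applied to an \emph{optimal} online policy this shows that the LP optimum is at least the value of that policy, i.e.\ \ref{eq:lp-hierarchy-laminar} is a relaxation. This mirrors the argument of \cref{sec:lp-optimal} showing \ref{eq:lp-exante} is a relaxation, with the single-type sub-problems replaced by the maximal-small-bin sub-problems; it works for an \emph{arbitrary} marking $(\Sm,\La)$, since the relaxation direction uses nothing about the marking algorithm itself — only that $\Sm$ partitions the elements and every bin of a maximal small bin $B$ lies in $\lamfam^B\subseteq\lamfam$.

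First I would fix such a $\pi$ and, for each $B\in\Sm$, track the \emph{local state} $\vec s\in\mathbb{Z}^{\lamfam^B}$ — the vector of remaining capacities of the bins in $\lamfam^B$ — along a random execution of $\pi$. Since every bin in $\lamfam^B$ is a subset of $B$ and elements arrive in a fixed known order, this local state changes only when $\pi$ selects an element of $B$, it is frozen between consecutive arrivals of $B$-elements, and — because valuations are independent across elements — its distribution at the arrival time of an element $t$ is independent of $v_t$. I then read off the LP variables from this distribution: $\Y_t(\vec s)$ is the probability that the local state of the unique $B\in\Sm$ containing $t$ equals $\vec s$ when $t$ arrives; $\X_t(\vec s,v)$ is the probability of the joint event that this local state equals $\vec s$ and $t$ is selected, conditioned on $v_t=v$; and $\X_t(v)=\sum_{\vec s\in\mathcal{S}^B}\X_t(\vec s,v)$.

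Next I would check, for each $B\in\Sm$, that $\{\X_t(\vec s,v),\X_t(v),\Y_t(\vec s)\}_{t\in B}\in\Po^B$, and that the global ex-ante constraints hold. The marginal identity and $0\le\X_t(\vec s,v)\le\Y_t(\vec s)$ follow because ``local state $\vec s$ and $t$ selected'' is a sub-event of ``local state $\vec s$'', using independence of $v_t$ from the local state at time $t$. The initialization $\Y_{t_0}([k_{B'}]_{B'\in\lamfam^B})=1$ with $t_0=\min\{t\in B\}$ holds since no element of $B$ — hence of any $B'\in\lamfam^B$ — arrives before $t_0$, so all local capacities are still full. The Bellman/state-update equation is a total-probability identity: letting $t'$ be the $B$-predecessor of $t$, the local state equals $\vec s$ at $t$ exactly when either it was $\vec s$ at $t'$ and $t'$ was rejected, or it was $\vec s+\vec d_{t'}$ at $t'$ and $t'$ was selected (picking $t'$ subtracts exactly $\vec d_{t'}$, the indicator of which bins of $\lamfam^B$ contain $t'$); expanding the probabilities of these two disjoint events, using that ``$t'$ selected'' / ``$t'$ rejected'' splits the mass of each state, gives the stated equation, and the frozen-state observation identifies the result with $\Y_t(\vec s)$. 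Feasibility of $\pi$ with respect to every bin of $\lamfam\supseteq\lamfam^B$ forces the local state of $B$ to remain in $\mathcal{S}^B$, so $\Y_t(\vec s)=0$ for $\vec s\in\partial\mathcal{S}^B$. Finally, for $B\in\La$, $\sum_{t\in B}\Exv{v_t}{\X_t(v_t)}=\sum_{t\in B}\Pr{t\text{ selected}}=\Ex{\lvert\{t\in B:t\text{ selected}\}\rvert}\le k_B$ by feasibility of $\pi$; and the LP objective at this point is $\sum_t\Exv{v_t}{v_t\cdot\X_t(v_t)}=\sum_t\Ex{v_t\cdot\1{t\text{ selected}}}$, the expected social-welfare of $\pi$.

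The one step deserving care — and the main, if modest, obstacle — is the Bellman-equation verification: one must argue cleanly that the local state of $B$ is a function of the past selections inside $B$ alone (so neither elements outside $B$ nor $\pi$'s internal coins ever move it), that it is constant between successive $B$-arrivals, and that conditioning on $v_t=v$ leaves the state distribution at time $t$ unchanged. Granting that bookkeeping, the remainder is a direct transcription of the corresponding argument for \ref{eq:lp-exante} in \cref{sec:lp-optimal}.
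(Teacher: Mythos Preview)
Your proposal is correct and follows essentially the same approach as the paper's own proof: read off the LP variables from the allocation and local-state probabilities of a feasible online policy, then verify each constraint of $\Po^B$ (initialization, box constraints, state-update, forbidden states) and the global ex-ante constraints directly from feasibility. If anything, you are more explicit than the paper about the bookkeeping points (independence of $v_t$ from the local state at time $t$, and the local state being frozen between consecutive $B$-arrivals), but the argument is the same.
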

\begin{proof}{Proof.}
Consider the optimal online policy and its induced feasible online policy for the sub-problem $B$. Let $\{\X_t(\vec{s},v_t)\}_{t\in B}$ be the allocation probabilities and $\{\Y_t(\vec{s})\}_{t\in B}$ be the state evolution probabilities of this policy. First of all, clearly the objective function of the LP is equal to the expected social welfare of the online policy. Second, $\Y_{t_0}([k_{B'}]_{B'\in\lamfam^B})=1$, as the policy has not yet picked any elements in $B$ when the first element in $B$ arrives. Moreover, the policy respects all the capacity constraints point-wise. Hence, in the resulting assignment $\Y_t(\vec{s})=0$ for $\vec{s}\in\partial\mathcal{S}^B$ and the global expected constraints are satisfied.

The only remaining constraint to check is the Bellman update constraint of $\Po^B$. In order to see the satisfaction of the constraint, note that the policy will reach state $\vec{s}$ at time $t$ if and only if either the state at time $t'$ (i.e. the last time an element arrived in $B$) is $\vec{s}$ and the element at time $t'$ is not selected, or the state at time $t$ is $\vec{s}+\vec{d}_{t'}$ and the element at time $t'$ is selected, evolving the state from $\vec{s}+\vec{d}_{t'}$ to $\vec{s}+\vec{d}_{t'}-\vec{d}_{t'}=\vec{s}$. Therefore, the state evolution probabilities satisfy the Bellman update constraint.
\qed
\end{proof}

\subsection{Exact rounding through adaptive pricing with randomized tie-breaking}
Given a particular marking $(\Sm,\La)$, we show there exists a family of adaptive pricing with randomized tie-breaking policies where each of these pricing policies exactly rounds the solution induced by the optimal solution of (\ref{eq:lp-hierarchy-laminar}) in each small bin $B\in \Sm$.

The above rounding schemes can then be combined with each other, resulting in an online policy that is point-wise feasible inside each small bin and only feasible in expectation inside each large bin, i.e. it only respects the large bin capacity constraints in expectation. Our randomized tie-breaking policies are characterized by $\left\{\tau_t^B(\vec{s}),p^B_t(\vec{s})\right\}_{B\in\Sm}$. The final procedure is simple: once an element arrives at time $t$ that belongs to $B\in \Sm$, the algorithm looks at the state of the bin $B$ (suppose it is $\vec{s}$), and posts the price $\tau^B(\vec{s})$ with tie-breaking probability $p_t(\vec{s})$. The element is then accepted w.p. 1 if $v_t>\tau_t^B(\vec{s})$, w.p. 0 if $v_t<\tau_t^B(\vec{s})$, and w.p. $p_t(\vec{s})$ if $v_t=\tau_t^B(\vec{s})$. We formalize this discussion in the following proposition. 


\begin{proposition}
\label{prop:adaptive-pricing-laminar}
For the laminar Bayesian online selection problem, given any marking $(\Sm,\La)$ of the laminar tree, there exists an adaptive pricing policy with randomized tie breaking whose expected  welfare is equal to the optimal solution of the linear program~(\ref{eq:lp-hierarchy-laminar}). Moreover, the resulting policy is feasible inside each small bin and feasible in expectation inside each large bin.
\end{proposition}

By putting all the pieces together, we run the following algorithm given a particular marking.
\begin{algorithm}[ht]
\small
\algblock[Name]{Start}{End}
\algblockdefx[NAME]{START}{END}%
[2][Unknown]{Start #1(#2)}%
{Ending}
\algblockdefx[NAME]{}{OTHEREND}%
[1]{Until (#1)}
 \caption{ PTAS-Laminar~($\Sm,\La,\eps$)
     \label{alg:ptas-laminar}}
      \begin{algorithmic}[1]
        \State{\textbf{Input}} parameter $\eps>0$.
        \State Multiply the capacities of all the large bins $B\in\La$ by $(1-\eps)$. 
        \item Solve the LP relaxation (\ref{eq:lp-hierarchy-laminar}) for the given marking $(\Sm,\La)$.
        \State Extract adaptive prices $\{\tau^B_t(\vec{s})\}$ and adaptive tie-breaking probabilities $\{p^B_t(\vec{s})\}$ for every maximal small bin $B\in \Sm$ and $t\in B$.
        \State Run the adaptive pricing with randomized tie breaking for each small bin $B\in \Sm$ separately, using the computed prices and probabilities in step (4). In the exceptional cases when there is no remaining capacity when a customer arrives, we offer her the price of infinity. 
      \end{algorithmic}
\end{algorithm}
\begin{remark}
Once an element $t$ arrives, the algorithm identifies the maximal small bin $B\in\Sm$ that contains $t$, and finds the current state $\vec{s}$ in this bin. It then posts the price $\tau_t^B(\vec{s})$ with randomized tie-breaking probability $p^B_t(\vec{s})$.
\end{remark}
\begin{proof}{{Proof of \Cref{prop:adaptive-pricing-laminar}}.}
 Similar to \Cref{prop:adaptive-pricing}, Let $\{\X^*_t(\vec{s},v)\}$ and $\{\Y^*_t(\vec{s})\}$ be the optimal solutions of \ref{eq:lp-hierarchy-laminar}. Consider the following simple online randomized rounding scheme: start at time $0$ where no elements are picked. Now, suppose at time $t\geq 1$, the current state, i.e., number of picked elements in each bin is represented by $\vec{s}$ and the realized value of the arriving element is $v_t=v$. The rounding algorithm first checks whether the arriving element belongs to a small bin. If it does not, then it picks the element with probability $\X_t(v_t)$. On the other hand, if the arriving element belongs to some small bin, the rounding algorithm first checks whether $\Y^*_t(\vec{s})$ is zero. If yes, it skips the element. Otherwise, it picks the element with probability $ \tfrac{\X^*_t(\vec{s},v)}{\Y^*_t(\vec{s})}$. Details of the proof are similar to that of \Cref{prop:adaptive-pricing} and hence are omitted for brevity. \qed
\end{proof}
\subsection{Marking and concentration for constant-depth laminar}
\label{sec:laminar-mark}
In this section, we want to show that our rounding algorithm (\cref{alg:ptas-laminar}) achieves a $(1-O(\eps))$ fraction of the expected social-welfare obtained by the optimal online policy. Note that (\ref{eq:lp-hierarchy-laminar}) is a relaxation, and scaling down the large capacities by a factor of $1-\epsilon$ reduces the benchmark by at most a factor $(1-\eps)$ \revcolor{(simply because if we pick an optimal solution $\{\X^*_t(\vec{s},v)\}$ and $\{\Y^*_t(\vec{s})\}$  of \ref{eq:lp-hierarchy-laminar} before reducing the capacities, and then multiply this solution by $(1-\eps)$, the objective value is multiplied by $1-\epsilon$, while this new solution will become feasible in the modified version of the LP with capacities reduced by a factor of $1-\epsilon$).}

Once an element arrives at time $t$, consider all large bins $B\in \La$ that are along a path from this element to the root of the laminar tree. By construction, the expected value extracted from this element by the pricing policy would be exactly equal to the contribution of this element to the objective value of (\ref{eq:lp-hierarchy-laminar}) (after scaling down the capacities), but only if the element is not ignored; an element will be ignored, i.e., offered a price of infinity, if one of the mentioned large capacities is exceeded. Therefore, to show that the loss is bounded by $O(\eps)$ fraction of total, we only need to show that the bad event of an element being ignored happens with a probability that is bounded by $O(\eps)$.

To bound the above probability, we need a concentration bound for the random variable corresponding to the total number of elements picked in each large bin. If we could show negative dependence among selection indicators of the optimal online policy (with a particular ordering of the elements), or if we show negative dependency between the indicator random variable of selecting an element and the number of selected elements so far, we could get a concentration using the Chernoff bound or Azuma inequality for super-martingales. In fact, this is something we will exploit in the next section for a subclass of laminar matroids. 

Nevertheless, the particular forms of negative dependence above do not hold for general laminar matroids with arbitrary arrival order of elements. We show this fact in the following example.
\begin{figure}[H]
  \centering
  \includegraphics[width=0.35\columnwidth]{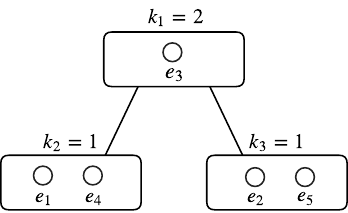}
  \vspace{2mm}
     \caption{Bad example showing lack of negative dependency for optimal online policy of general laminar matroid. }
     \label{fig:example1}
\end{figure}
\begin{example}
	Consider the laminar matroid depicted in \Cref{fig:example1} with elements arriving one by one from $e_1$ to $e_5$. 
Let $e_4$ and $e_5$ be uniformly distributed on $\set{0, 2}$ and $e_3$ be uniformly distributed on $\set{0, 1}$. If $e_1$ is picked, then only one of $e_2, e_3, e_5$ can be picked. One can see that in this case, the price offered to $e_2$ would be 1 since $\mathbb{E}[\max(e_3, e_5)]=1$. On the other hand, if $e_1$ is discarded, there are two cases. (i) if $e_2$ is discarded, then the optimum online policy would have to pick two of $e_3, e_4, e_5$. One can see that the expected value obtained by the optimum online policy is 2.25. (ii) If $e_2$ is selected, then the optimum online policy would have to select one item from $e_3, e_4$, in which case the expected obtained value would be $1$. Therefore, in this case, the price offered to $e_2$ would have to be 1.25.
		
	Note that this example shows that by not selecting $e_1$, there is a higher price offered to $e_2$ which means, by definition, that the negative dependency does not hold (neither between indicator random variables corresponding to selection of different elements nor between the indicator random variable of selecting an element and the number of selected elements in the past).

\end{example}


We now propose a marking algorithm, parametrized by $\delta>0$, such that it guarantees the required concentration. Without loss of generality, assume $k_B\leq k_{B'}$ for any two bins $B$ and $B'$, whenever $B$ is a child of $B'$ in the laminar tree.\footnote{Otherwise, just drop the constraint on the child.} Let $L$ be the depth of the given instance (which we assume is constant in this section). Now, for every bin $B$ at depth $d$ of the laminar tree, i.e. when it has distance $d$ from the root,  mark it as small if and only if $k_B\leq \frac{1}{\delta^{L-d}}$, and large otherwise. If a node is marked as small, then we mark all of its children as small too (\Cref{fig:marking}).

\begin{figure}[H]
  \centering
  \includegraphics[width=0.9\columnwidth]{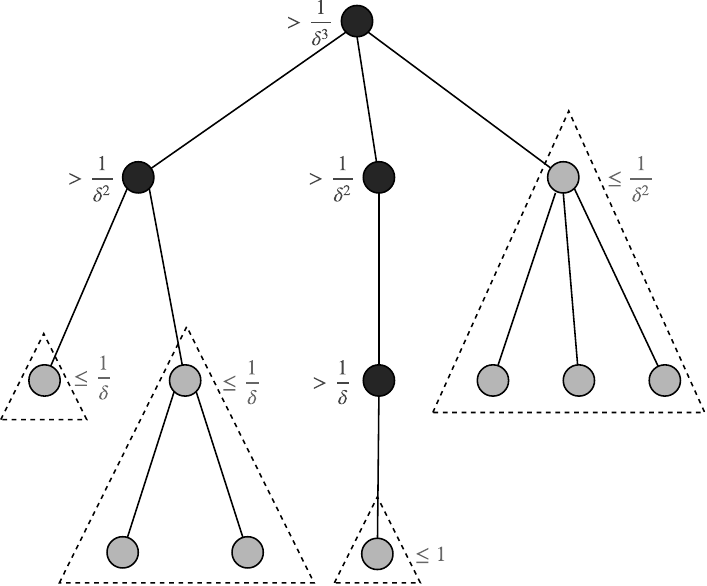}
     \caption{Depth-based marking for concentration. }
     \label{fig:marking}
\end{figure}

The key idea here is that our proposed marking algorithm provides \emph{enough separation} between a large bin and its small descendants. In fact, we partition the bins so that the capacity of every large bin is at least $\frac{1}{\delta}$ times the capacity of any of its small immediate descendant bins. This separation provides us with the required concentration bound. 
\begin{theorem}
\revcolor{Using the proposed marking algorithm and by setting $\delta=\frac{\eps^2}{3\log(L/\eps)}$, \cref{alg:ptas-laminar}}  is a $(1-O(\eps))$-approximation for the expected welfare of the optimal online policy in the laminar matroid Bayesian selection problem with depth $L$, and runs in time $\texttt{poly}(n)$ assuming $L$ and $\eps$ to be constant.
\end{theorem}
\begin{proof}{Proof.}
\revcolor{Consider a hypothetical run of \Cref{alg:ptas-laminar} on maximal small bins in $\Sm$ ignoring the capacity constraints corresponding to large bins in $\La$.} For every $B'\in\Sm$, let $C_{B'}$ denote the total number of elements picked from this bin. As maximal small bins $\Sm$ induce partitioning over the set of all elements and since \cref{alg:ptas-laminar} runs an independent online policy in each bin $B'\in \Sm$, random variables $\{C_{B'}\}_{B'\in\Sm}$ are mutually independent. 

Now consider a large bin $B\in\La$ and let $B'_1,...,B'_m$ be the maximal small descendant bins that partition $B$. Moreover, assume that bin $B$ is at depth $d$ of the laminar tree. Hence $k_B>\frac{1}{\delta^{L-d}}$ and $k_{B'_{j}}\leq \frac{1}{\delta^{L-d-1}}$ for $j=1,...,m$. As the policy inside each small bin $B'_j$ is a point-wise feasible policy, $C_{B'_j}\leq k_{B'_{j}}\leq \frac{1}{\delta^{L-d-1}}$. Moreover, linear program~(\ref{eq:lp-hierarchy-laminar}) imposes a soft-constraint equal to $\sum_{t\in B'_j}\Exv{v_t}{\X^*_t(v_t)}$ on each small bin. This soft-constraint should be respected in expectation by the final pricing policy (due to the construction of our randomized rounding scheme), so $\Ex{C_{B'_j}}\leq \sum_{t\in B'_j}\Exv{v_t}{\X^*_t(v_t)}$ for all $B'_j$. Because of the global expected constraints, these soft constraints should respect the large bin capacity constraints of the laminar matroid when large capacities are scaled down by a factor $(1-\eps)$. Therefore
\[
\displaystyle \sum_{j=1}^m\Ex{C_{B'_j}}\leq \sum_{j=1}^{m}\sum_{t\in B'_j}\Exv{v_t}{\X^*_t(v_t)}= \sum_{t\in B}\Exv{v_t}{\X^*_t(v_t)}\leq k_B(1-\eps)
\]
Now, by applying simple Chernoff bound for independent random variables $\{C_{B'_j}\}_{j=1}^m$, we have (define the notation $\tilde{C}_{B'_j}\triangleq C_{B'_j}\cdot\delta^{L-d-1}$, so that it normalizes the total count to $[0,1]$):
\begin{align}
\Pr{\displaystyle\sum_{j=1}^m C_{B'_j}>k_B}&=\Pr{\displaystyle\sum_{j=1}^m (C_{B'_j}-\Ex{C_{B'_j}})>k_B-\displaystyle\sum_{j=1}^m \Ex{C_{B'_j}}}\leq \Pr{\displaystyle\sum_{j=1}^m (C_{B'_j}-\Ex{C_{B'_j}})>\eps k_B}\nonumber\\
&= \Pr{\displaystyle\sum_{j=1}^m (\tilde{C}_{B'_j}-\Ex{\tilde{C}_{B'_j}})>\eps\cdot\delta^{L-d-1}k_B}      \leq \exp\left(  -\frac{\eps^2\cdot\delta^{2(L-d-1)}\cdot k_B^2}{3\sum_{j=1}^{m}\Ex{\tilde{C}_{B'_j}}}\right)\nonumber\\
&\label{eq:bin-bound}\leq\exp\left(  -\frac{\eps^2\cdot\delta^{(L-d-1)}\cdot k_B}{3}\right)\revcolor{\leq \exp\left(  -\frac{\eps^2}{3\delta}\right)~,}
\end{align}
\revcolor{where in the last inequality we use the fact that $K_B>\frac{1}{\delta^{L-d}}$ for a bin $B$ at depth $d$ in our marking algorithm. Now, for a particular element $t$, consider a path from the maximal small bin containing this element to the root of the laminar tree. This path may contain several large bins and we need to check if their capacities are exceeded at the time of arrival of the element $t$ in the hypothetical run of \Cref{alg:ptas-laminar}(when we ignore large-bin capacities). We take a union bound over all such bad events, noting that there are at most $L$ such bad events, simply because the length of the path connecting the maximal small bin to the root is at most the depth of the laminar matroid.
 \revcolor{For each element $t'$, let ${Z}_{t'}$  be the allocation binary variable of element $t'$ in this hypothetical run.}
 By applying union bound we have:
 \begin{equation*}
 \Pr{\exists B\in\La:t\in B,\displaystyle\sum_{t'\in B}Z_{t'}>k_{B}}\leq Le^{\left(  -\frac{\eps^2}{3\delta}\right)}=\eps
 \end{equation*}
where the first inequality is due to union bound (over at most $L$ bad events, each corresponding to one of the large bins on the path from the small bin to the root) and the upper bound established in \eqref{eq:bin-bound} for a large bin at any depth $d$, and the last equality holds as $\delta=\frac{\eps^2}{3\log(L/\eps)}$.


Consequently, with probability at least $1-\epsilon$ none of these capacities are exceeded at the time the algorithm processes element $t$ in this hypothetical run. Therefore, if for each element $t$ we compare the actual run of \Cref{alg:ptas-laminar} (when large bin capacities are enforced; see the description of the algorithm) with the hypothetical run (when large bin capacities are ignored),  in  $1-\epsilon$ fraction of sample paths we see that the two algorithms obtain exactly the same value from request $t$. Therefore, due to the linearity of expectations, \cref{alg:ptas-laminar} achieves $(1-\eps)$ fraction of the expected social-welfare of \Cref{alg:ptas-laminar} in this hypothetical run when all the large bin capacities are  ignored, which is exactly equal to the objective value of $\eqref{eq:lp-hierarchy-laminar}$ when large capacities are multiplied by $1-\eps$. As mentioned earlier, multiplying capacities by $1-\eps$ only reduces the objective value by a multiplicative factor of $1-\eps$. Putting all pieces together, and due to the fact that $\eqref{eq:lp-hierarchy-laminar}$ is a relaxation for the optimal online policy, \Cref{alg:ptas-laminar} obtains $(1-\eps)^2=1-O(\eps)$ fraction of the expected social-welfare of the optimal online policy. Moreover, the linear program~(\ref{eq:lp-hierarchy-laminar}) has size at most $O(n^{\frac{1}{\delta^L}})$, and hence the running time is $\texttt{poly}\left(n^{\left(\frac{3\log(L/\eps)}{\eps^{2}}\right)^L}\right)$ by setting $\delta=\frac{\eps^2}{3\log(L/\eps)}$. This running time is $\texttt{poly}(n)$ assuming $L$ and $\eps$ are constant. \qed}
\end{proof}

\subsection{Formal proof of \texorpdfstring{\Cref{prop:adaptive-pricing}}{TEXT} and adaptive prices/tie-breaking probabilities}
\label{sec:proof-of-prop1}
To end this section, we provide the details of how to round the linear programming solution of  (\ref{eq:lp-optimal}). Let $\{\X^*_t(\vec{s},v),\Y^*_t(\vec{s})\}$ be the optimal solution of LP. First consider the following simple online randomized rounding scheme. It starts at time $0$ with no elements picked. Now, suppose at time $t\geq 1$, the current state, i.e. vector representing the number of picked element in each bin, is $\vec{s}$ and the realized value of the arriving element is $v_t=v$. The rounding algorithm first checks whether $\Y^*_t(\vec{s})$ is zero. If yes, it skips the element. Otherwise, it flips a coin and with probability $\tfrac{\X^*_t(\vec{s},v)}{\Y^*_t(\vec{s})}$ picks the element. Note that if we assume for all possible states $\vec{s'}$, and all possible value $v'$ the following holds so far:
\begin{align*}
&\forall t'<t: \Pr{\textrm{policy picks the element arriving at time $t'$ and the state at $t'$ is $\vec{s'}$} \given v_{t'}=v'}=\X^*_{t'}(\vec{s'},v'),\\
&\forall t\leq t: \Pr{\textrm{policy reaches the state $\vec{s'}$ at $t'$}}=\Y^*_{t'}(\vec{s'}),
\end{align*}
then by progressing from $t$ to $t+1$, the same invariant holds because:
\begin{align*}
&\Pr{\textrm{policy picks the element arriving at time $t$ and the state at $t$ is $\vec{s}$} \given v_{t}=v}\\
=&\Pr{\textrm{policy reaches the state $\vec{s'}$ at $t'$}}\cdot \tfrac{\X^*_t(\vec{s},v)}{\Y^*_t(\vec{s})}=\Y^*_t(\vec{s})\cdot\tfrac{\X^*_t(\vec{s},v)}{\Y^*_t(\vec{s})}=\X^*_t(\vec{s},v)
\end{align*}
where in the first line we use that $v_t\sim F_t$ is independently drawn from the past. Moreover, 
\begin{align*}
&\Pr{\textrm{policy reaches the state $\vec{s}$ at $t+1$}}\\
&=\Pr{\textrm{policy reaches the state $\vec{s}$ at $t$}}-\Pr{\textrm{policy the element arriving at time $t$ and the state at $t$ is $\vec{s}$ }}+\\
&~~~~~\Pr{\textrm{policy picks the element arriving at time $t$ and the state at $t$ is $\vec{s}+\vec{d}_{t}$}}\\
&=\Y^*_{t}(\vec{s})-\Exv{v_t}{\X^*_t(\vec{s},v_t)}+\Exv{v_t}{\X^*_t(\vec{s}+\vec{u}_{l_t},v_t)}=\Y^*_{t+1}(\vec{s})
\end{align*}
where in the last line we used the fact that the optimal solution of the LP satisfies the state evolution update rule (i.e. the LP constraint). 

Putting all the pieces together, we conclude that the mentioned simple randomized rounding exactly \emph{simulates} the probabilities predicted by the optimal LP solution, and hence is a point-wise feasible online policy with the same expected social welfare as the optimal value of the LP. 

It only remains to show how an adaptive pricing mechanism with randomized tie breaking can also round the LP exactly. Let $\Z^*_t(\vec{s},v)\triangleq\tfrac{\X^*_t(\vec{s},v)}{\Y^*_t(\vec{s})}$ for every $v$ and $\vec{s}$ where $\Y^*_t(\vec{s})\neq 0$. By applying a simple coupling argument, we claim there exists a threshold $\tau$ such that $\Z^*_t(\vec{s},v)=1$ for $v>\tau$, and $\Z^*_t(\vec{s},v)=0$ for $v<\tau$. If not, one can slightly move the allocation probability mass of the randomized rounding given $v_t=v$, i.e. $\Z^*_t(\vec{s},v)$,  towards higher values, while maintaining the same expected marginal allocation $\Exv{v_t}{X^*_t(\vec{s},v_t)}$. This ensures that the state evolution probabilities will remain the same as the original randomized rounding, and hence this improved rounding algorithm can be coupled after time $t$ with the original randomized rounding (hence, will respect all the capacity constraints). 

This new rounding algorithm achieves strictly more expected total value, a contradiction to the optimality of the original randomized rounding. Now, if $\Y^*_t(\vec{s})=0$ let $\tau_t(\vec{s})=+\infty$ (so that the adaptive pricing does not pick the element in this situation). Otherwise, let $\tau_t(\vec{s})$ be the threshold at which $\Z^*_t(\vec{s},v)$ switched to zero, and let $p_t(\vec{s})= \Z^*_t(\vec{s},\tau_t(\vec{s}))$. With these choices, the adaptive pricing with randomized tie breaking simulates the conditional probabilities $\Z^*_t(\vec{s},v)$ and couples with the original randomized rounding of the optimal LP solution, so it will be an optimal online policy. \qed

\begin{remark}[{computing prices and tie-breaking probabilities in \cref{prop:adaptive-pricing}}]
 Given the optimal assignment of \ref{eq:lp-optimal}, prices and tie-breaking probabilities can easily be computed. At any time $t$ and feasible state $\vec{s}$,  find the minimum price $\tau_t(\vec{s})$ for which the probability that the value of the arriving element is above the price is at most $\frac{\Exv{v_t}{\X^*_t(\vec{s},v_t)}}{\Y^*_t(\vec{s})}$. After finding $\T_t(\vec{s})$, set the tie-breaking probability to $p_t(\vec{s})=\frac{1}{\Pr{v_t=\tau_t(\vec{s})}}\left(\frac{\Exv{v_t}{\X^*_t(\vec{s},v_t)}}{\Y^*_t(\vec{s})}-\Pr{v_t>\tau_t(\vec{s})}\right)$. An straightforward calculation shows this pricing policy has exactly the same state probabilities and expected social-welfare (LP objective) as that of the optimum LP solution. 
\end{remark}

\section{Beyond Constant Depth: the production constrained problem}
\newcommand{\KS}{{\mathcal{N}}}
\newcommand{\qedm}{\tag*{$\square$}}

\label{sec:production-simple}
In this section we formalize the production constrained Bayesian selection, a special case of laminar matroid Bayesian selection problem and then propose a PTAS for the optimal online policy for maximizing social-welfare. Similar to \Cref{sec:laminar}, to obtain our results we present an exponential-sized dynamic program that characterizes the optimum online policy and show how it can be written as a linear program. We then relax this linear program and explore how it can be rounded to a feasible online policy without considerable loss in expected social-welfare. In this section we leverage the special structure of this problem to go beyond constant depth.


\subsection{Problem description}

Consider a firm that produces multiple copies of $m$ different types of products over time. The firm offers these items in an online fashion to $n$ arriving unit-demand buyers. We assume each buyer $t=1,\ldots,n$ is only interested in one type of product and has a private value $v_t$ drawn independently from a known value distribution $F_t$ (which can be atomic or non-atomic).

 Buyers arrive at continuous times over the span of $T$ days, and reveal their value upon arrival.\footnote{Although the main goal of this paper is the selection problem and not the incentive compatible mechanism design, as we will mention later, all of our policies are pricing and hence truthful for myopic buyers.} \revcolor{We assume that the arrival time of buyers and the sequence of buyer types (i.e., which product type the buyer is interested in at each dau) are  known in advance. However, the values $v_1,\ldots,v_T$ unknown and  revealed sequentially to the decision maker.} At the end of $T$ days, the firm ships the sold items to the buyers.

Our goal is to find a feasible online policy for allocating the items to buyers to maximize \emph{social-welfare}, or equivalently, the sum of the valuations of all the served buyers. A \emph{feasible} policy should respect \emph{production constraints}, i.e., at any time the number of sold items of each type is no more than the number of produced items of that type. Moreover, it should respect the \emph{shipping constraint}, i.e. the total number of items sold does not exceed the shipping capacity $K$.\footnote{As a running example throughout the paper, the reader is encouraged to think of TESLA Inc. as the firm and its different models of electric cars, i.e. Model 3, Model X and Model S, as different product types.} 

  Suppose that at the beginning of day $i$,  the firm has produced $k^j_{i}$ items of type $j$. Let $B^j_i\subseteq[n]$ (referred to as \emph{bin}) denote the set of buyers of type $j$ arriving before day $(i+1)$ and $B^j\triangleq B^j_T$ denote the set of all buyers of type $j$. 
   The laminar structure corresponding to this problem can be seen in \Cref{fig:prod-laminar}. \revcolor{We assume that the structure of the laminar matroid in \Cref{fig:prod-laminar}, in particular, the shipping capacity $K$ and the production amounts $k_1^j,\ldots,k_T^j$ for each type of product $j$, are known in advance.}

\begin{figure}[htb]
  \centering
  \includegraphics[width=0.95\columnwidth]{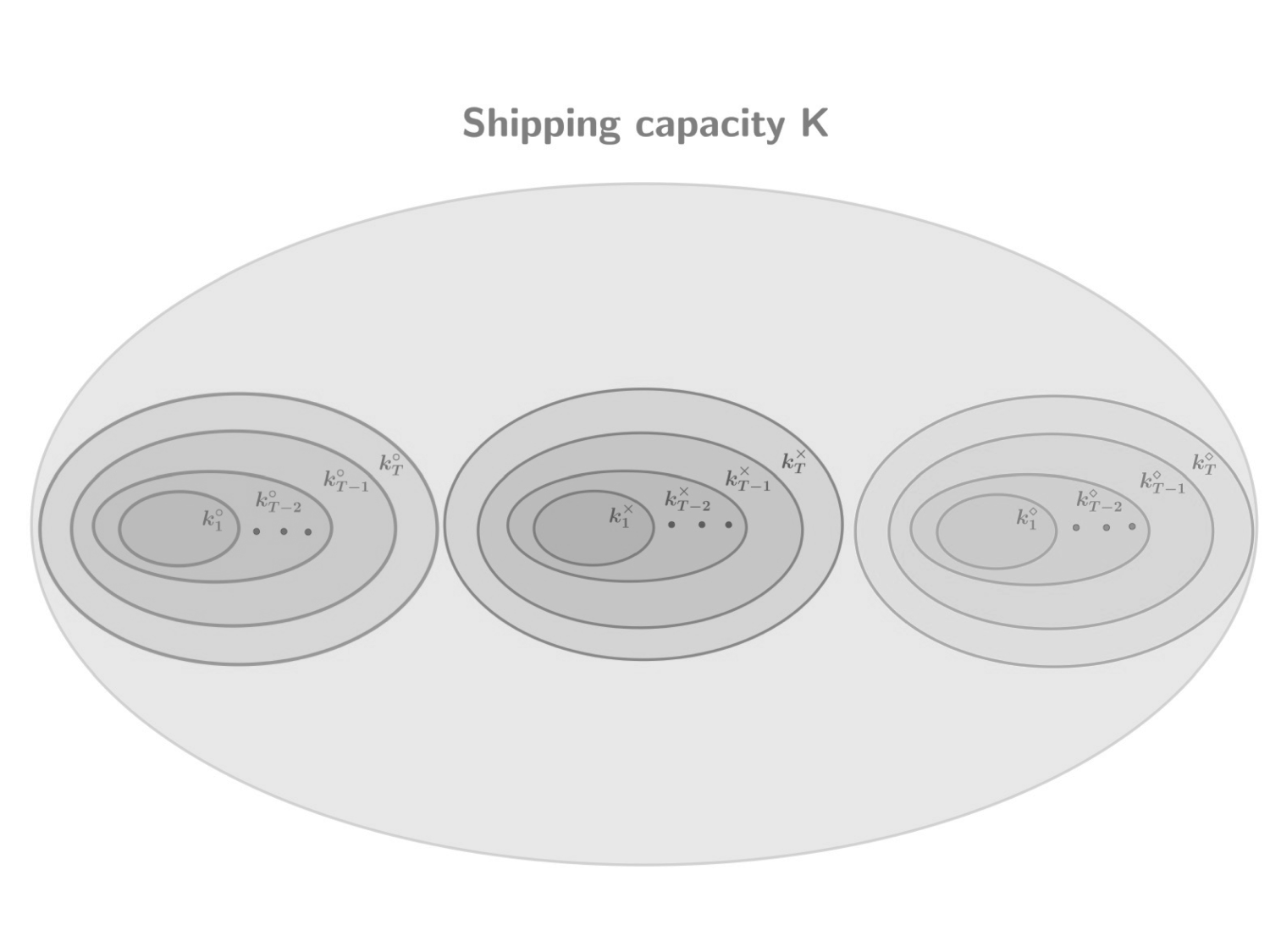}
     \caption{Production constrained Bayesian selection as a special case of Laminar Bayesian selection; each of the types $\circ$(red), $\times$(blue), and $\diamond$(green) has a corresponding collection of nested bins (path laminar), and these bins are inside an outer bin $[n]$ to model the shipping constraint.}
     \label{fig:prod-laminar}
\end{figure}

Similar to \Cref{sec:laminar}, we use the optimum online policy as a benchmark. Next we present a dynamic program that characterizes the optimum online policy.

\subsection{Linear programming formulation and expected relaxation}

Our production constrained Bayesian selection problem can be solved exactly using a simple exponential-sized dynamic program. In \Cref{sec:laminar}, we represented each state by keeping track of number of picked elements in each bin. In this section, because of the specific structure of our problem, we can simplify and represent each state by a vector $\vec{s}=[s_1,s_2,...,s_m]\in \mathbb{Z}^{m}$ maintaining the current number of sold products of each type. We say $\vec{s}$ is a \emph{feasible state} at time $t$ if it can be reached at time $t$ by a feasible online policy respecting all production constraints and the shipping constraint. It is possible to check whether $\vec{s}$ is feasible at time $t$ using a simple greedy algorithm.

Define $\V_t(\vec{s})$ to be the maximum total expected welfare that an online policy can obtain from time $t$ to time $n$ given $\vec{s}$. Define $\V_{t}(\vec{s}) = -\infty$ when $s$ is not feasible at time $t$ and $\V_{n+1}(\vec{s}) = 0$ for all $\vec{s}$.  Similar to \Cref{sec:lpformulation}, we can compute $\V_t(s)$ for the remaining values of $s$ and $t$ recursively using the following \emph{Bellman equation}:
\begin{equation}
\V_{t}(\vec{s})= \max_\tau \parens*{\ExX{v_t\sim F_t}{\parens*{v_t+\V_{t+1}(\vec{s}+\vec{e}_{j_t})}\cdot \1{v_t\geq \tau}}+\ExX{v_t\sim F_t}{\V_{t+1}(\vec{s})\cdot \1{v_t<\tau}}}.
\end{equation}
Note that the price $\tau_t(\vec{s})= \V_{t+1}(\vec{s})-\V_{t+1}(\vec{s}+\vec{e}_{j_t})$ maximizes the above equation, and so the final prices of an optimal online policy can be computed easily given the table values.  

As we mentioned is \cref{sec:laminar}, we can turn this dynamic program into a linear program almost exactly the same as (\ref{eq:lp-optimal}). We just replace $\vec{d}_t$ by $\vec{e}_{j_t}$ and redefine the set of forbidden neighboring states as
$$\partial S(t)\triangleq\left \{\vec{s}\in \mathbb{Z}^m: \left[\textrm{$\vec{s}$ is not a feasible state upon the arrival of buyer $t$}\right]~\&~\left[\exists j~\textrm{s.t.}~\vec{s}-\vec{e}_j\in \mathcal{S}\right]\right\}.$$
Where as before $\mathcal{S}\subset \mathbb{Z}^m$ is defined to be a finite set containing all possible feasible states at any time $t$. One can see that $\partial \mathcal{S}(t)$ has at most $O(n^{K+1})$ states.
 Unfortunately our linear programming formulation will still have exponential size. Nevertheless, without a shipping constraint, it can be solved in polynomial time. In fact, any online policy can be decomposed into $m$ separate online policies for type-specific \emph{sub-problems}; in each sub-problem, its corresponding policy only requires to respect the production constraints of its type. At the same time, the dynamic programming table of each sub-problem $j$ is polynomial-sized, as the state at time $t$ is essentially the number of sold products of type $j$ before $t$. Therefore the overall optimal online policy can be computed in polynomial time.

What if we relax the shipping constraint  to hold only in expectation (over the randomness of the policy/values)?  This relaxation is used in the prophet inequality literature and is termed as the \emph{expected relaxation}.

Next we formulate the  expected LP relaxation. First, re-define $\mathcal{S}\triangleq [1:{\max}_j~{k^j_T}]$ to be the set of possible states of each sub-problem.\footnote{Notably, we only need $\mathcal{S}$ to be a superset of all feasible states of each sub-problem $j$ at any time $t$.} Second, for each type $j$ and buyer $t\in B^j$, we use allocation variables $\X_t(s_j,v)$, marginal variables $\X_t(v)$, and state variables $\Y_t(s_j)$. These variable are defined as in (\ref{eq:lp-optimal}) and $s_j$ represents the number of items sold of type $j$ before the arrival of buyer $t$. \revcolor{We reiterate that variable $\X_t(s_j,v)$ represents the probability of being in state $s_j$ and having an allocation at time $t$ conditioned on $v_t=v$, $\X_t(v)$ represents the marginal probability of having an allocation at time $t$ conditioned on $v_t=v$, and $\Y_t(s_j)$ represents the probability of being at state $s_j$ at the beginning of time $t$.} We further use variables $\KS_j$ to represent the expected number of served buyers of each type $j$.
\begin{equation*}
\label{eq:lp-exante}\tag{LP$_3$}
\begin{array}{ll@{}ll}
\text{maximize}  & \displaystyle \sum_{t=1}^{n}\Exv{v_t}{v_t\cdot\X_t(v_t)} & &\\
\text{subject to}& \displaystyle\sum_{t\in B^j}\Exv{v_t}{\X_t(v_t)}\leq \KS_j,&~~~j=1,\ldots,m.\\
&\{\X_t(s_j,v),\X_t(v),\Y_t(s_j)\}_{t\in B^j}\in \Po^{\textrm{sub}_j},&~~~j=1,\ldots,m. \\
 &\displaystyle \sum_{j=1}^m \KS_j\leq K.&~~~\emph{(shipping capacity in expectation)}
\end{array}
\end{equation*}
where $B^j\subseteq [n]$ is the set of all buyers of type $j$ and $\Po^{\textrm{sub}_j}$ is the polytope of \emph{point-wise feasible online policies} for serving type $j$ buyers, defined by the following set of linear constraints:
\begin{equation*}
\begin{array}{ll@{}ll}
& \X_t(v)=\displaystyle\sum_{s_j\in \mathcal{S}}\X_{t}(s_j,v) &~~\forall v,~ t\in B^j,\\ \\
                                                         &0 \leq \X_{t}(s_j,v)\leq \Y_{t}(s_j)&~~\forall v,~s_j\in\mathcal{S}, t\in B^j, \\
                                                        &\Y_{t}(s_j)=\Y_{t'}(s_j)-\Exv{v_{t'}}{\X_{t'}(s_j, v_{t'})}+\Exv{v_{t'}}{\X_{t'}(s_j-1, v_{t'})}&~~\forall s_j\in \mathcal{S},~ t,t'\in B^j,~~~~~({\emph{\textrm{state update}}}) \\ 
                                                        &&~~[t'+1:t-1]\cap B^j=\emptyset \\ \\
                                                        
                                                        & \Y_{t_0}(0)=1 &~~t_0=\min\{t\in B^j\}\\ 
                                                           &\Y_t(s_j)=0&~~\forall s_j\in \partial \mathcal{S}^j(t), t\in B^j.~~~~({\emph{\textrm{feasibility check}}})
\end{array}
\end{equation*}
where $\partial \mathcal{S}(t)$ is the set of all forbidden neighboring states of the sub-problem of type $j$ at time $t$, i.e. 
$$\partial S^j(t)\triangleq \left\{s\in \mathbb{Z}: \left[\textrm{${s}$ is greater than the total production of type $j$ up to time $t$}\right]~\&~\left[s\leq \underset{j}{\max}~k_{T}^j+1\right]\right\}$$
Note that because of the collapse of the state space, \ref{eq:lp-exante} has polynomial size.

As every online policy for our problem induces a feasible online policy for each request type $j$, and because it respects the shipping capacity point-wise, we have the following proposition.
\begin{proposition}
\label{prop:lp-exante-relaxation}
\ref{eq:lp-exante} is a relaxation of the optimal online policy for maximizing expected social-welfare in the production constrained Bayesian selection problem.
\end{proposition}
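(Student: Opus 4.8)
The plan is to exhibit, for every feasible online policy for the production constrained problem, a feasible point of \ref{eq:lp-exante} whose objective equals that policy's expected social-welfare; applying this to an optimal online policy then shows the LP optimum is at least the optimum online welfare, which is exactly the assertion that \ref{eq:lp-exante} is a relaxation. So fix a feasible online policy $\pi$. For each type $j$, each buyer $t\in B^j$, each value $v$, and each $s_j\in\mathcal{S}$, define $\Y_t(s_j)$ to be the probability (over $v_1,\dots,v_{t-1}$ and the internal randomness of $\pi$) that exactly $s_j$ items of type $j$ have been sold before buyer $t$ arrives, and define $\X_t(s_j,v)$ to be the probability of that same event intersected with ``buyer $t$ is served,'' conditioned on $v_t=v$. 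Since valuations are independent across buyers and $\pi$ is online, the pre-$t$ type-$j$ count depends only on $v_1,\dots,v_{t-1}$, so conditioning on $v_t=v$ does not change it and $\Y_t(s_j)$ is indeed unconditional. Finally set $\X_t(v)=\sum_{s_j\in\mathcal{S}}\X_t(s_j,v)$ and $\KS_j=\sum_{t\in B^j}\Exv{v_t}{\X_t(v_t)}$.

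Next I would check that this assignment lies in $\Po^{\textrm{sub}_j}$ for every $j$. The defining equation for $\X_t(v)$ holds by construction; the initial condition $\Y_{t_0}(0)=1$ is immediate; and $0\le\X_t(s_j,v)\le\Y_t(s_j)$ because the event ``$t$ served and pre-$t$ type-$j$ count equals $s_j$'' is contained in the event ``pre-$t$ type-$j$ count equals $s_j$.'' The feasibility-check constraints $\Y_t(s_j)=0$ for $s_j\in\partial\mathcal{S}^j(t)$ hold because $\pi$ respects the production constraint of type $j$ at all times, so the number of sold type-$j$ items before $t$ never exceeds the total type-$j$ production up to $t$, and all larger states get probability zero. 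The only constraint needing care is the state-update equation: for consecutive type-$j$ buyers $t',t$ with $[t'+1:t-1]\cap B^j=\emptyset$, no type-$j$ item is sold strictly between $t'$ and $t$, so the type-$j$ count before $t$ equals $s_j$ exactly when either the count before $t'$ equals $s_j$ and $t'$ is not served, or it equals $s_j-1$ and $t'$ is served. Taking probabilities of this disjoint union, and recognizing the ``$t'$ served'' contributions as $\Exv{v_{t'}}{\X_{t'}(\cdot\,,v_{t'})}$, yields exactly the stated recursion; this is the coordinatewise analogue of the law-of-total-probability argument the excerpt already gives for \ref{eq:lp-optimal}.

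It remains to handle the ex-ante shipping constraint and the objective. By definition $\sum_{j=1}^m\KS_j=\sum_{j=1}^m\sum_{t\in B^j}\Exv{v_t}{\X_t(v_t)}$ is the expected total number of buyers served by $\pi$, and since $\pi$ respects the shipping constraint pointwise (at most $K$ items are ever sold), this expectation is at most $K$, so $\sum_j\KS_j\le K$. For the objective, $\Exv{v_t}{v_t\cdot\X_t(v_t)}$ equals $\Exv{}{v_t\cdot\1{t\text{ served}}}$ by the tower rule, since $\X_t(v_t)$ is the serving probability of $t$ conditioned on $v_t$; summing over $t$ gives the expected social-welfare of $\pi$. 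Hence the constructed point is feasible for \ref{eq:lp-exante} with objective equal to the expected welfare of $\pi$, and choosing $\pi$ optimal among online policies finishes the argument. I expect the only subtle point to be the state-update verification — the fact that marginalizing the full system's inventory vector onto a single type coordinate still obeys the one-step update, which holds precisely because type-$j$ inventory is changed only by type-$j$ buyers — and even this is a direct adaptation of the reasoning already spelled out for \ref{eq:lp-optimal}.
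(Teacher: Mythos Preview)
Your proposal is correct and follows exactly the approach the paper indicates: the paper's entire justification is the sentence preceding the proposition (``every online policy for our problem induces a feasible online policy for each request type $j$, and because it respects the shipping capacity point-wise''), and you have simply filled in the details of that sketch. The analogous \cref{prop:lp-hierarchy-relaxation-laminar} is proved in the appendix by the same state-update verification you carry out, confirming that your expansion matches the intended argument.
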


\subsection{A Polynomial-Time Approximation Scheme (PTAS)}
Given parameter $\epsilon>0$, our proposed polynomial time approximation scheme  is based on solving a linear program with size polynomial in $n$ and an adaptive pricing mechanism with randomized tie breaking that rounds this LP solution to a $\left(1-O(\eps)\right)$-approximation. For notation purposes here and in \cref{sec:neg-dep-proof}, let $\{\X^*_t(\vec{s},v_t),\X^*_t(v_t),\Y^*_t(\vec{s})\}$ be the optimal solution of the linear program corresponding to the optimum online policy, and $\{\X^*_t(s_j,v_t),\X^*_t(v_t),\Y^*_t(s_j),\KS^*_j\}$ be the optimal assignment of \ref{eq:lp-exante} for the buyers $t\in B^j$.

\paragraph{Overview.} Consider the linear program of the optimal online policy and its expected relaxation (\ref{eq:lp-exante}). For a given constant $\delta>0$, we turn to one of these linear programs, depending on whether the shipping capacity $K$ is \emph{small} ($K\leq \frac{1}{\delta}$) or \emph{large} ($K> \frac{1}{\delta}$). In the former case, we pay the computational cost of solving the optimal policy LP and then round it exactly to a point-wise feasible online policy. In the latter case, we first reduce the large shipping capacity $K$ by a factor of $(1-\epsilon)$ to create some slack, and then solve \ref{eq:lp-exante} with this reduced shipping capacity (which has polynomial size). We then round the LP solution exactly by an adaptive pricing with randomized tie breaking policy. As we will show later, the resulting online policy respects all the constraints of the production constrained  Bayesian selection problem with high probability. 


\paragraph{The algorithm.} More precisely, we run the following algorithm (\cref{alg:ptas}).
\begin{algorithm}[htb]
\small
\algblock[Name]{Start}{End}
\algblockdefx[NAME]{START}{END}%
[2][Unknown]{Start #1(#2)}%
{Ending}
\algblockdefx[NAME]{}{OTHEREND}%
[1]{Until (#1)}
 \caption{ PTAS-Production-Constrained~($\epsilon,\delta$)
     \label{alg:ptas}}
      \begin{algorithmic}[1]
       \State{\textbf{Input}} parameters $\eps,\delta>0$.
        \If{$\left(K\leq \frac{1}{\delta}~\textrm{i.e. when shipping capacity is small}\right)$} 
        \State Solve the linear program of the optimal online policy.
        \State Given the optimal assignment, extract adaptive prices $\tau_t(\vec{s})$ and tie-breaking probabilities $\tau_t(\vec{s})~,\forall\vec{s}\in \mathcal{S}$.
        \Else {~$\left(K> \frac{1}{\delta}~\textrm{i.e. when shipping capacity is large}\right)$}
        \State Reduce the shipping capacity $K$ by a multiplicative factor $(1-\eps)$. 
        \State Solve the expected linear program (\ref{eq:lp-exante}) with the reduced shipping capacity of $K(1-\epsilon)$.
		\For{product types $j=1,2,\ldots,m$}
        \State  Given the optimal assignment corresponding to the variables of type $j$, extract adaptive prices $~~~~~~~~~~~~~~~~~$$\tau_t(s_j)$ and  tie-breaking probabilities $p_t(s_j)$ for every $s_j\in \mathcal{S}$ and every buyer $t$ such that $j_t=j$. 
        \EndFor
        \EndIf
        \State Offer the adaptive prices (with randomized tie breaking) computed above to arriving buyers based on their types $j$. In the exceptional cases when there is no remaining shipping capacity, offer the price of infinity.
      \end{algorithmic}
\end{algorithm}
\paragraph{Computing prices and tie-breaking probabilities.}  
Given $\{\X^*_t(\vec{s},v_t),\Y^*_t(\vec{s})\}$, the proof of \cref{prop:adaptive-pricing} in \cref{sec:proof-of-prop1} 
 gives a recipe to find $\tau_t(\vec{s})$ and $p_t(\vec{s})$ efficiently, so that the corresponding adaptive pricing with randomized tie-breaking policy  maintains the same expected marginal allocation $\Exv{v_t}{\X^*_t(\vec{s},v_t)}$ as the optimal online policy for every buyer $t$ and state $\vec{s}\in \mathcal{S}$, while having at least the same expected social-welfare. 

For the case of large shipping capacity, we apply exactly the same argument for each sub-problem $j$ separately. Given $\{\X^*_t(s_j,v_t),\Y^*_t(s_j),\KS^*_j\}$ for ${t\in B^j}$, we can efficiently find prices $\tau_t(s_j)$ and probabilities $p_t(s_j)$, so that the corresponding adaptive pricing with randomized tie-breaking for buyers with type $j$ maintains the same expected marginal allocation $\Exv{v_t}{\X^*_t(s_j,v_t)}$ for every $t\in B^j$ and $s_j\in \mathcal{S}$, while having at least the same expected social-welfare from serving each individual buyer of type $j$.\footnote{Note that when $K$ is large, we run separate pricing policies for each type $j$. Hence, given the type of buyer $t$, its offered price and tie-breaking probability is determined based on the current state of the sub-problem $j_t$, namely $s_{j_t}$.}

%
%

\paragraph{Feasibility, running time and social-welfare.} Clearly, \cref{alg:ptas} is a feasible online policy in the case of small shipping capacity (\cref{prop:adaptive-pricing}). In the case of large shipping capacity, as $\Y^*_t(s_j)=0$ for any forbidden neighboring state of sup-problem $j$, the same argument shows that it respects all of the production constraints of each type $j$. The policy also never violates the shipping capacity by construction, and hence is feasible. In terms of running time, if the shipping capacity is small, we solve solve the linear program of optimal online policy which has at most $n^{\frac{1}{\delta}}$ states, as no more than $\frac{1}{\delta}$ requests can be accepted. On the other hand, if the shipping capacity is large, we solve the expected linear program (\ref{eq:lp-exante}) which again can be solved in polynomial time. By setting $\delta=\eps^2/\log(1/\eps)$, \cref{alg:ptas} has running time $\texttt{poly}(n^{\frac{\log(1/\eps)}{\eps^2}})$. We further show that its expected welfare is at least $(1-O(\eps))$ fraction of the expected  welfare of the optimal online policy.

\begin{theorem}[PTAS for optimal online policy]
\label{THM:MAIN-1}
By setting $\delta=\frac{\eps^2}{\log(1/\eps)}$, \cref{alg:ptas} is a $(1-O(\eps))$-approximation for the expected social-welfare of the optimal online policy of the production constrained Bayesian selection problem, and runs in time $\texttt{poly}(n^{\frac{\log(1/\eps)}{\eps^2}})$.
\end{theorem}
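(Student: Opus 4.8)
The plan is to verify the two branches of \cref{alg:ptas} separately and then assemble the bounds. When the shipping capacity is small ($K\le 1/\delta$), the algorithm solves \ref{eq:lp-optimal} exactly and rounds it with the adaptive pricing policy of \cref{prop:adaptive-pricing}. Since \ref{eq:lp-optimal} is an exact encoding of the Bellman recursion --- every feasible online policy yields a feasible LP point, and \cref{prop:adaptive-pricing} turns every LP point back into a feasible online policy of the same expected welfare --- its optimum equals the optimal online welfare, so this branch is \emph{exactly} optimal and hence trivially a $(1-O(\eps))$-approximation. For the running time, at most $K\le 1/\delta$ requests are ever accepted, so $\abs{\mathcal{S}}=O(n^{1/\delta})$ and $\abs{\partial\mathcal{S}(t)}=O(n^{1/\delta+1})$; thus \ref{eq:lp-optimal} has $\mathrm{poly}(n^{1/\delta})$ variables and constraints and is solved in $\mathrm{poly}(n^{1/\delta})=\mathrm{poly}(n^{\log(1/\eps)/\eps^2})$ time once $\delta=\eps^2/\log(1/\eps)$.

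When the capacity is large ($K>1/\delta$), I would first bound the loss incurred by passing to \ref{eq:lp-exante} with the reduced capacity $(1-\eps)K$. By \cref{prop:lp-exante-relaxation} the optimum of \ref{eq:lp-exante} with capacity $K$ is at least the optimal online welfare $\OPT_{\mathrm{on}}$; and scaling any feasible point of that LP by $(1-\eps)$ --- equivalently, running the induced per-type policies only with probability $1-\eps$ and idling otherwise, which stays inside each $\Po^{\mathrm{sub}_j}$ by convexity --- multiplies the objective and every $\KS_j$ by $1-\eps$, producing a feasible point of the reduced LP of value $(1-\eps)\OPT_{\mathrm{on}}$. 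The algorithm then rounds the reduced-LP optimum, type by type, into an adaptive pricing policy with randomized tie-breaking whose expected welfare from type-$j$ buyers is at least the type-$j$ part of the LP objective and which respects the type-$j$ production constraints point-wise, since $\Y^*_t(s_j)=0$ on every forbidden neighboring state. The composite policy therefore satisfies all production constraints always and, because it posts price $\infty$ once $K$ items are sold, never violates the shipping constraint either, so it is feasible.

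It remains to show that the ``price $\infty$ when full'' override costs only an $O(\eps)$ fraction of the welfare, and this is the crux. Let $N$ be the number of items the un-overridden composite policy would sell; then $\mathbb{E}[N]=\sum_j \KS^*_j\le(1-\eps)K$, and the override deletes exactly the values of buyers served after the $K$-th sale. The tool for bounding this deleted value --- the content of \cref{prop:social-welfare-1}, which ultimately rests on the \emph{negative cylinder dependency} of the allocation indicators established in \cref{sec:neg-dep-proof} --- is a Chernoff-type inequality $\Pr{N\ge K}=\Pr{N\ge \mathbb{E}[N]+\eps K}\le\exp(-\Omega(\eps^2 K))$. Since $K>1/\delta=\log(1/\eps)/\eps^2$, the exponent is $\Omega(\log(1/\eps))$, so this probability is at most $\eps^{\Omega(1)}$; the same estimate controls $\mathbb{E}[(N-K)^{+}]$, and combined with a per-buyer accounting of the deleted value (conditioned on the past, each buyer $t$ would contribute the LP's intended expected welfare for its current type-state, while the independent event that $K$ sales have already occurred before $t$ has probability at most $\Pr{N\ge K}$) it yields total expected deleted welfare $O(\eps)\,\OPT_{\mathrm{on}}$. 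Together with the $(1-\eps)$ relaxation loss this gives the claimed $(1-O(\eps))$-approximation, and since \ref{eq:lp-exante} has size $\mathrm{poly}(n)$, this branch runs in $\mathrm{poly}(n)$ time, so \cref{alg:ptas} overall runs in $\mathrm{poly}(n^{\log(1/\eps)/\eps^2})$ time. The main obstacle is precisely establishing the negative dependency that powers the concentration: this requires showing the value function of the underlying dynamic program behaves submodularly, which is the delicate part deferred to \cref{sec:neg-dep-proof}.
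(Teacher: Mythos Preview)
Your proposal is correct and mirrors the paper's approach exactly: solve \ref{eq:lp-optimal} and round via \cref{prop:adaptive-pricing} when $K\le 1/\delta$, and for $K>1/\delta$ solve \ref{eq:lp-exante} with capacity $(1-\eps)K$, round type-by-type, and invoke the negative-cylinder concentration of \cref{prop:social-welfare-1} (established in \cref{sec:neg-dep-proof}) to bound the override probability, with the same running-time accounting. The only slip is calling the override event ``independent'' of buyer $t$'s type-state contribution---it is not, since both depend on the type-$j_t$ history---but the paper is equally informal at precisely this step, simply asserting that an $O(\eps)$ bad-event probability suffices to bound the welfare loss.
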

\begin{corollary}[PTAS to maximize revenue] By applying Myerson's lemma from Bayesian mechanism design~\citep{hartline2012approximation,myerson1981optimal} and replacing each buyer value $v_t$ with her \emph{ironed virtual value} $\bar\phi_t(v_t)$, \cref{THM:MAIN-1} gives a PTAS for the optimal online policy for maximizing expected-revenue. 
\end{corollary}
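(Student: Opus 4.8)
The plan is to use Myerson's payment identity to convert the revenue objective into an ironed virtual welfare objective, realize the latter as a bona fide instance of the production constrained Bayesian selection problem, and then quote \cref{THM:MAIN-1} verbatim. Restrict attention (without loss, since all our policies are pricing and the buyers are myopic) to truthful online policies, i.e. those in which, conditioned on the history, the allocation probability of the arriving buyer is monotone nondecreasing in her reported value. For any such policy $\pi$, conditioning on the realized state $\vec{s}$ upon the arrival of buyer $t$ freezes a single-parameter monotone allocation rule for her, so Myerson's lemma \citep{myerson1981optimal,hartline2012approximation} gives that her conditional expected payment is $\Exv{v_t\sim F_t}{\bar\phi_t(v_t)\cdot\1{\text{$t$ is served}}}$, where $\bar\phi_t$ is the ironed virtual value function of $F_t$. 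Averaging over histories, the expected revenue of $\pi$ equals its expected \emph{ironed virtual welfare} $\Exv{}{\sum_t \bar\phi_t(v_t)\,\1{\text{$\pi$ serves $t$}}}$. Since the revenue-optimal online policy is an adaptive threshold policy (with the usual ironed lotteries on flat pieces of $\bar\phi_t$) and hence truthful, the optimal expected online revenue $\mathrm{REV}^\star$ equals the optimal expected online ironed virtual welfare.

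Next I construct the reduced instance. Set $w_t\triangleq\bar\phi_t(v_t)$; because $\bar\phi_t$ is nondecreasing, $w_t$ has a well-defined distribution $G_t$ (possibly atomic — each ironed interval of $F_t$ collapses to an atom of $G_t$). Serving a buyer with $w_t<0$ only decreases ironic virtual welfare, so the optimal online virtual-welfare policy never does this, and a buyer with $w_t=0$ is indifferent; consequently $\mathrm{REV}^\star$ also equals the optimal online \emph{social}-welfare of the production constrained Bayesian selection instance whose value distributions are $G_t^+$, the law of $\max(\bar\phi_t(v_t),0)\ge 0$ (truncation at $0$ is lossless), with the same types, arrival order, production counts $k_i^j$, and shipping capacity $K$. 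Assuming the standard access to the $F_t$ that renders $\bar\phi_t$ and $G_t^+$ efficiently computable, this is a legitimate input to \cref{THM:MAIN-1} (which explicitly permits atomic distributions), so with $\delta=\epsilon^2/\log(1/\epsilon)$ it outputs, in time $\mathrm{poly}(n^{\log(1/\epsilon)/\epsilon^2})$, an adaptive pricing policy with randomized tie-breaking on the $G^+$-instance whose expected social-welfare is at least $(1-O(\epsilon))\,\mathrm{REV}^\star$.

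Finally I pull this policy back to the original value space. A threshold $\theta>0$ in $w$-space corresponds, via the generalized inverse of the nondecreasing map $\bar\phi_t$, to the price $\tau_t(\vec{s})=\bar\phi_t^{-1}(\theta)$ in $v$-space; a threshold at $\theta=0$ is implemented as the monopoly reserve $\phi_t^{-1}(0)$, so that buyers with negative virtual value are never served and the policy's revenue stays equal to its virtual welfare; and randomized tie-breaking at an atom of $G_t$ becomes the ironed lottery on the corresponding interval of $F_t$, which preserves truthfulness for the myopic buyer. The pulled-back policy depends only on the current state and $v_t$, hence is online-implementable, respects exactly the same production and shipping constraints, and — applying Myerson's identity to \emph{it} — has expected revenue equal to its expected ironed virtual welfare, which equals its expected $G^+$-welfare and is therefore $\ge(1-O(\epsilon))\,\mathrm{REV}^\star$. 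Rescaling $\epsilon$ by a constant yields the claimed PTAS for the revenue-optimal online policy.

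\textbf{Main obstacle.} The substantive points are two: first, justifying the revenue–virtual-welfare equivalence for fully adaptive, history-dependent online policies rather than the textbook static setting, which requires the observation above that conditioning on the realized history freezes a monotone single-parameter allocation rule; and second, the careful bookkeeping around ironed intervals/atoms — showing that the pulled-back policy is genuinely an adaptive pricing policy with randomized tie-breaking (with ties now possibly occurring on whole ironed intervals), that it remains truthful, and that discarding negative virtual values is both lossless for the objective and consistent with feasibility. Everything else (feasibility, running time, the $(1-O(\epsilon))$ factor) is inherited directly from \cref{THM:MAIN-1}.
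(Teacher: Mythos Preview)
The paper does not actually supply a proof of this corollary; it is stated as a one-line consequence of \cref{THM:MAIN-1} via the standard Myerson reduction (the same remark is made in the introduction's footnote). Your proposal is a correct and careful expansion of exactly that reduction: equate expected revenue with expected ironed virtual welfare using Myerson's payment identity (applied after conditioning on the history, so that each buyer faces a monotone single-parameter rule), truncate negative virtual values, feed the resulting nonnegative, possibly atomic distributions into \cref{THM:MAIN-1}, and pull the output pricing policy back through $\bar\phi_t^{-1}$ with ironed intervals handled by randomized tie-breaking. This is precisely the approach the paper gestures at, just fleshed out; there is no alternative argument in the paper to compare against.
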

\subsection{Analysis of the algorithm (proof of \texorpdfstring{\Cref{THM:MAIN-1}}{TEXT})}
\label{sec:analysis}
If the shipping capacity is small, i.e. $K\leq \frac{1}{\delta}$, \cref{alg:ptas} has the optimal expected social-welfare among all the feasible online policies, because of the same reason as the optimality of \ref{eq:lp-optimal} (\cref{prop:adaptive-pricing}). Next consider the expected LP in the case when $K>\frac{1}{\delta}$. By \cref{prop:lp-exante-relaxation}, its optimal solution is an upper bound on the social-welfare of any feasible online policy. By scaling the shipping capacity by a factor $(1-\eps)$, we change the optimal value of this LP by only a multiplicative factor of at least $(1-\eps)$. 

As sketched before, for each type $j$ the adaptive pricing policies $\{\tau_t(s_j),p_t(s_j)\}$ extract an expected value from buyer $t$ that is at least equal to the contribution of this buyer to the objective value of the expected LP. However, buyer $t$ can be served by the adaptive pricing policy of type $j_t$ only if the large shipping capacity has not been exceeded yet. So, to bound the loss, the only thing left to prove is that the probability of this bad event is small (as small as $O(\eps)$).


\revcolor{\paragraph{Concentration, negative dependency, and super-martingales.} In the case when the shipping capacity is large, let $X_t\in \{0,1\}$ be a Bernoulli random variable, indicating whether the resulting pricing policy of type $j_t$ serves the buyer $t$ or not. Note that  $\sum_{t}\Ex{X_t}\leq (1-\eps)K$, as \ref{eq:lp-exante}  ensures feasibility of the shipping constraint in expectation. Now, if the total count $\sum_{t}X_t$ concentrates around its expectation, we will be able to bound the probability of a bad event that the shipping capacity is exceeded.

Clearly, $\{X_t\}_{t\in B^1}, \{X_t\}_{t\in B^2},\ldots, \{X_t\}_{t\in B^m}$ are mutually independent, as we run a separate adaptive pricing policy for each type $j$. However, the indicator random variables of the same type are not mutually independent with each other. So, for proving the required concentration, a sub-Gaussian concentration bound, such as the Chernoff bound or the Azuma bound, cannot be applied immediately. However, we can still hope to obtain our desired concentration bound if the sequence $\{{Y}_t\triangleq\sum_{\tau\leq t}(X_\tau-\Ex{X_\tau})\}$ forms a super-martingale. We first prove the following technical lemma. 

\begin{lemma}[Super-martingale property] Let $X_1,X_2,\ldots$ be a sequence of Bernoulli random variables such that the sequence $\{{Y}_t\triangleq\sum_{\tau\leq t}(X_\tau-\Ex{X_\tau})\}$ is a super-martingale, that is, $\forall t:\mathbb{E}\left[Y_t|Y_{t-1}\right]\leq Y_{t-1}$. Then for $\epsilon\in[0,\frac{1}{2}]$:
$$
\Pr{Y_t>\epsilon\cdot\Ex{\sum_{\tau\leq t}X_{\tau}}}\leq e^{-2\epsilon^2(\Ex{\sum_{\tau\leq t}X_{\tau}})}
$$
    \label{lemma:multiplicative-azuma}
\end{lemma}
\begin{proof}{Proof.}
Let $\Pi_t\triangleq \prod_{\tau\leq t}(1+\epsilon)^{X_\tau}(1-\epsilon)^{\Ex{X_\tau}}=(1+\epsilon)^{\sum_{\tau\leq t}X_\tau}(1-\epsilon)^{\sum_{\tau\leq t}\Ex{X_\tau}}$ and $\Pi_0\triangleq 1$. Note that:
$$
\Pi_t=\Pi_{t-1}(1+\epsilon)^{X_t}(1-\epsilon)^{\Ex{X_t}}\leq \Pi_{t-1}(1+\epsilon X_t-\epsilon\Ex{X_t}),
$$
where the last inequality holds because $(1+\epsilon)^x(1-\epsilon)^y\leq (1+\epsilon(x-y))$ as long as $\lvert x-y\rvert<1$. Therefore, by taking conditional expectation from both sides, we have:
\begin{equation}
\label{eq:martingale}
\mathbb{E}\left[\Pi_t|\Pi_{t-1}\right]\leq \Pi_{t-1}+\epsilon\mathbb{E}\left[X_t-\Ex{X_t}|\Pi_{t-1}\right]\leq \Pi_{t-1},
\end{equation}
where in the last inequality we use the fact that $Y_t$ is a super-martingale and hence:
$$
\mathbb{E}\left[X_t-\Ex{X_t}|\Pi_{t-1}\right]=\mathbb{E}\left[Y_t-Y_{t-1}|\Pi_{t-1}\right]=\mathbb{E}\left[Y_t-Y_{t-1}|Y_{t-1}\right]\leq 0.
$$
Applying the inequality in \eqref{eq:martingale} recursively and taking expectations, we have:
$$\Ex{\Pi_t}\leq \Ex{\Pi_{t-1}}\leq \ldots\leq \Ex{\Pi_0}=1~.$$
Now, let $\alpha=\Ex{\sum_{\tau\leq t}X_\tau}$.  Using the Markov inequality, for any $\delta\in[0,1]$, we have:
$$
\Pr{\ln(\Pi_t)>\delta \alpha}=\Pr{\Pi_t>e^{\delta \alpha}}\leq \Ex{\Pi_t}e^{-\delta \alpha}\leq e^{-\delta \alpha}
$$
At the same time, note that $\ln(\Pi_t)=\ln(1+\epsilon)(Y_t+\alpha)-\alpha\ln(\frac{1}{1-\epsilon})$. We then have the following:
$$
\ln(\Pi_t)>\delta \alpha \Longrightarrow \frac{\ln(1+\epsilon)}{\ln(\frac{1}{1-\epsilon})}(Y_t+\alpha)-\alpha>\frac{\delta\alpha}{\ln(\frac{1}{1-\epsilon})}\Longrightarrow \frac{\epsilon}{\frac{\epsilon}{1-\epsilon}-\frac{\epsilon^2}{2(1-\epsilon)^2}}(Y_t+\alpha)-\alpha>\frac{\delta (1-\epsilon)}{\epsilon}\alpha,
$$
where in the last inequality we used the facts that for every $\epsilon\in[0,1]$, we have:
\begin{equation}
    \ln(1+\epsilon)\leq \epsilon~~\textrm{and}~~~\frac{\epsilon}{1-\epsilon}-\frac{1}{2}\left(\frac{\epsilon}{1-\epsilon}\right)^2\leq \ln\left(1+\frac{\epsilon}{1-\epsilon}\right)\leq \frac{\epsilon}{1-\epsilon}
\end{equation}
By rearranging the terms, we conclude that:
$$
\Pr{\frac{1}{1-\frac{\epsilon}{2(1-\epsilon)}}(Y_t+\alpha)-\alpha>\frac{\delta}{\epsilon}\cdot \alpha}<e^{-\delta \alpha}\Longrightarrow \Pr{Y_t+\alpha-\alpha(1-\frac{\epsilon}{2(1-\epsilon)})>\frac{\delta}{\epsilon}\cdot \alpha}<e^{-\delta K}
$$
By setting $\delta=2\epsilon^2$, using the fact that $\epsilon\in[0,\frac{1}{2}]$, and rearranging the terms, we have:
$$
\Pr{Y_t>\epsilon\cdot\alpha}\leq \Pr{Y_t>-\alpha\frac{\epsilon}{2(1-\epsilon)})+2\epsilon\cdot \alpha}<e^{-2\epsilon^2 \alpha}~,
$$
which finishes the proof of the lemma.
\qed
\end{proof}



We now prove the following proposition, assuming the required super-martingale property for the allocations of each type. 
\begin{proposition} 
\label{prop:social-welfare-1}
If for every request type $j$, the random variables $\{X_t\}_{t\in B^j}$ satisfy the super-martingale property stated in \Cref{lemma:multiplicative-azuma} and if $K>\frac{1}{\delta}=\frac{\log(1/\epsilon)}{\epsilon^2}$ for some $\epsilon\in[0,\frac{1}{2}]$, then the probability that the shipping capacity $K$ is exhausted is $O(\eps)$. 
\end{proposition}
\begin{proof}{Proof.}
First of all, if for every type $j$ the super-martingale property in \Cref{lemma:multiplicative-azuma} holds for all Bernoulli random variables $\{X_t\}_{t\in B^j}$, then because of the mutual independence of the Bernoulli variables corresponding to allocations of different types, we can say that all Bernoulli random variables $\{X_t\}$ satisfy the super-martingale property stated in \Cref{lemma:multiplicative-azuma}. 




Note that $\sum_{t}\Ex{X_t}= (1-\eps)K$, simply because for any optimal solution of the linear program \eqref{eq:lp-exante}, the shipping capacity (in expectation) is clearly binding (otherwise, we can slightly increase the allocation probability of some element and it will contradict the optimality of the LP solution). Therefore we have:
\begin{align*}
\Pr{\displaystyle\sum_{t}X_t>K}= \Pr{\displaystyle\sum_{t}(X_t-\Ex{X_t})>\eps K}\leq e^{-{2K(1-\epsilon)\frac{\epsilon^2}{(1-\epsilon)^2}}}=O(\epsilon)~,
\end{align*}
where the first inequality holds because of \Cref{lemma:multiplicative-azuma}, and the last equality holds
by setting $K=\frac{\log(1/\epsilon)}{\epsilon^2}$. This will finish the proof. \qed
\end{proof}}

\subsubsection{Negative dependency for optimal online policy}
\label{sec:neg-dep-proof}
\revcolor{Fix a product type $j$. For notation simplicity, re-index $\{X_t\}_{t\in B^j}$ as $X_{t_1},\ldots,X_{t_l}$, where $t_1\leq t_2\leq\ldots\leq  t_l$ and $l\triangleq\lvert B^j\rvert$. To show that the sequence of random variables $\{X_{t_i}\}$ satisfies the super-martingale property described in \Cref{lemma:multiplicative-azuma}, 
one needs to show that for the adaptive pricing with randomized tie-breaking used in \cref{alg:ptas}, the probability of accepting a buy request at time $t$ can only decrease conditioned on more requests being accepted in the past. More precisely, we present and prove the following proposition.
\begin{proposition}\label{ncdthm}
	Let $X_t \in \{0 ,1\}$ be a random variable indicating whether the policy serves buyer $t$ or not. Then the variables $X_1, \ldots, X_n$  satsify the super-martingale property in \Cref{lemma:multiplicative-azuma}, that is, the sequence ${Y}_t\triangleq\sum_{\tau\leq t}(X_\tau-\Ex{X_\tau})$ for $t=1,\ldots,n$ is a super-martingale, or equivalently,  $\forall t\in[n]:\mathbb{E}\left[Y_t|Y_{t-1}\right]\leq Y_{t-1}$.
\end{proposition}}

\begin{proof}{Proof.}

First observe that given $\KS^*_j$ as the optimal soft shipping capacity that needs to hold only in expectation, $\{\X^*_t(s_j,v_t),\X^*_t(v_t),\Y^*_t(s_j)\}$ for $t\in B^j$ is indeed the optimal solution of the following linear program.

\begin{equation*}
\label{eq:lp-opt-sub}\tag{LP-sub$_j$}
\begin{array}{ll@{}ll}
\text{maximize}  & \displaystyle \sum_{t\in B^j}\Exv{v_t}{v_t\cdot\X_t(v_t)} & &\\
\text{subject to}& \displaystyle\sum_{t\in B^j}\Exv{v_t}{\X_t(v_t)}= \KS^*_j,&~~~\emph{(soft shipping constraint)}\\
&\{\X_t(s_j,v),\X_t(v),\Y_t(s_j)\}_{t\in B^j}\in \Po^{\textrm{sub}_j}.&
\end{array}
\end{equation*}
Therefore, it is enough to show the same property holds for another adaptive pricing with randomized tie-breaking algorithm that is used for exactly rounding \ref{eq:lp-opt-sub}, as both of these rounding algorithms have the same allocation distribution for the buyers of type $j$. 

For simplicity of the proofs in this section, we assume that the valuations are non-atomic.\footnote{For the case of atomic distributions, one can think of dispersing each value distribution first to get non-atomic distributions, and then proving our desired super-martingale property for any small dispersion. Then the same property for the original atomic distribution can be deduced from the super-martingale property of the dispersed distribution for small enough dispersion.} Note that under this assumption, there will be no need for randomized tie breaking, and indeed our rounding algorithm will be pure adaptive pricing. Now we prove our claim in two steps. 
\begin{enumerate}
\item \textbf{Step 1:} by using LP duality, we show that the optimal online policy of the sub-problem of type $j$ with an extra soft shipping constraint $\KS^*_j$ is indeed the optimal online policy for an instance that has no soft constraint and all the values are \emph{shifted} by some number $\lambda^*$, i.e. $\hat{v}_t=v_t-\lambda^*$.
\item \textbf{Step 2:} we show that the optimal online policy of a particular sub-problem $j$, whether value distributions have negative points in their support or not, \revcolor{satisfies the super-martingale property described in \Cref{lemma:multiplicative-azuma} (or equivalently, the probability that a new arriving request gets accepted decreases as more elements are accepted in the past).}
\end{enumerate}
\revcolor{Putting the two pieces together, we prove our desired super-martingale property (as in \Cref{lemma:multiplicative-azuma}) among $X_{t_1},...,X_{t_l}$ as desired. In the remaining of this section, we prove these two steps.}

\begin{proof}{Proof of Step 1.}
Consider (\ref{eq:lp-opt-sub}) that captures the optimal online policy of sub-problem $j$. We start by moving the soft shipping constraint into the objective of of  (\ref{eq:lp-opt-sub}) and writing the Lagrangian.
\begin{equation*}
\begin{array}{ll@{}ll}
\underset{\lambda}{\text{minimize}}~\underset{\X}{\text{maximize}}  & \displaystyle \mathcal{L}(\X, \lambda)=\sum_{t\in B^j}\Exv{v_t}{v_t\cdot\X_t(v_t)}+\lambda(\KS_j^*-\sum_{t\in B^j}\Exv{v_t}{\X_t(v_t)})&\\
&\{\X_t(s_j,v),\X_t(v),\Y_t(s_j)\}_{t\in B^j}\in \Po^{\textrm{sub}_j}.&
\end{array}
\end{equation*}
Let $\lambda^*$ be the optimum dual solution. By dropping the constant terms and rearranging we get the following equivalent program for the optimal solution:
\begin{equation*}
\begin{array}{ll@{}ll}
{\text{maximize}}  & \displaystyle \sum_{t\in B^j}\Exv{v_t}{(v_t-\lambda^*)\cdot\X_t(v_t)}\\
&\{\X_t(s_j,v),\X_t(v),\Y_t(s_j)\}_{t\in B^j}\in \Po^{\textrm{sub}_j}.&
\end{array}
\end{equation*}
This shows that the optimal online policy respecting the soft shipping  constraint $\KS_j^*$ is equivalent to the optimal online policy for an instance of the problem where all the values are shifted by some constant $\lambda^*$. \qed
\end{proof}

\begin{proof}{Proof of Step 2.}	We only need to show that the super-martingale property holds for the dynamic programming that solves each sub-problem, as the distributions are non-atomic and there is a unique deterministic optimal online policy, characterized both by the LP and the dynamic programming. Consider sub-problem $j$. We use induction to show by serving more customers in the past, the prices for new buyers increase. Let $s_t^j$ denote the total number of products of type $j$ that have been sold up to the arrival of buyer $t$. Note that the algorithm only needs $s_t^j$ to decide whether buyer $t$ should be served.

Let $D_t(s_t^j)$ denote the maximum total expected welfare that an online policy can obtain from time $t$ to $n$, assuming that it starts from state $s_t^j$. Also let $C_t^j$ denote the set of production checkpoints of type $j$ that occur at or after time $t$.
	Using the Bellman equations we have	
\[   
D_{t-1}(s_{t-1}^j) = 
   \left\{
     \begin{array}{@{}l@{\thinspace}l}
       D_t(s_{t-1}^j)    &~~\textrm{if}~~ \min\left(\left\{k_i^j-s_{t-1}^j\right\}_{i\in C_{t-1}^j}\right)=0\\
       \Exv{v_{t-1}}{\max(D_t(s_{t-1}^j+1)+v_{t-1}, D_t(s_{t-1}^j))} &~~\textrm{if}~~ \min\left(\left\{k_i^j-s_{t-1}^j\right\}_{i\in C_{t-1}^j}\right)>0 \\
     \end{array}
   \right.
\]

As the base of our induction, we know that if we serve the last buyer, the probability that we serve any other buyers does not increase. Now assume while serving buyer $t$, we have
\begin{align}\label{inductionHypo}
	D_t(s_t^j)-D_t(s_t^j+1)\leq D_t(s_t^j+1)-D_t(s_t^j+2).
\end{align}
Note that this shows the price offered to buyer $t$ increases if we serve more buyers before buyer $t$.
When buyer $t-1$ arrives, we need to show
\begin{align}\label{inductionStep}
	D_{t-1}(s_{t-1}^j)-D_{t-1}(s_{t-1}^j+1)\leq D_{t-1}(s_{t-1}^j+1)-D_{t-1}(s_{t-1}^j+2)
\end{align}
which is equivalent to
$$D_{t-1}(s_{t-1}^j)+D_{t-1}(s_{t-1}^j+2)\leq 2D_{t-1}(s_{t-1}^j+1).$$
Note that this property is linear in the terms involved. So it is enough to assume that the value $v_{t-1}$ is deterministic first and prove the above inequality. Then by linearity of expectation, the inequality would hold in the general case.

Note that for the case where $\min\left(\left\{k_i^j-s_{t-1}^j\right\}_{i\in C_{t-1}^j}\right)<2$, the inequality holds trivially because we assume $D_{t-1}(s)=-\infty$ for any non-negative integer $s$ such that $\min\left(\left\{k_i^j-s\right\}_{i\in C_{t-1}^j}\right)<0$.
According to our induction hypothesis, if $D_{t-1}(s_{t-1}^j+2)$ is updated, then the other two variables are updated as well. More precisely, if $D_{t-1}(s_{t-1}^j+2)=D_t(s_{t-1}^j+3)+v_{t-1}$, then
\begin{align*}
	&D_{t-1}(s_{t-1}^j+1)=D_t(s_{t-1}^j+2)+v_{t-1},\\
	&D_{t-1}(s_{t-1}^j)=D_t(s_{t-1}^j+1)+v_{t-1}.
\end{align*}
In a similar way, if $D_{t-1}(s_{t-1}^j+1)=D_t(s_{t-1}^j+2)+v_{t-1}$, then
\begin{align*}
	D_{t-1}(s_{t-1}^j)=D_t(s_{t-1}^j+1)+v_{t-1}.
\end{align*}

Considering these relations, we have three different cases. (i) none of these variables are updated. In this case, \cref{inductionStep} turns into \cref{inductionHypo} which holds according to our induction hypothesis. (ii) all of these variables are updated. In this case, \cref{inductionStep} turns into
\begin{align*}
	D_t(s_t^j+1)+v_{t-1}-D_t(s_t^j+2) - v_{t-1} \leq D_t(s_t^j+2) + v_{t-1} - D_t(s_t^j+3) - v_{t-1}
\end{align*}
which holds again according to the induction hypothesis.
Finally, (iii) the case where $D_{t-1}(s_{t-1}^j)$ is updated and $D_{t-1}(s_{t-1}^j+2)$ is not updated. In this case, we can write \cref{inductionStep} as
\begin{align*}
	D_t(s_t^j+1)+v_{t-1}+D_t(s_t^j+2)\leq 2 \max(D_t(s_t^j+2)+v_{t-1}, D_t(s_t^j+1))
\end{align*}
and this always holds because for any two values $x, y$, $\frac{x+y}{2}\leq \max(x, y)$. \qed
\end{proof}
\end{proof}

\subsection{Discussion: extension to laminar over non-constant depth sub-problems.}
As we saw earlier in \Cref{sec:laminar-mark}, we can prove the correctness of our polynomial time approximation scheme only if the depth of the laminar family is constant. At the same time, as an observation, if one thinks of the production constrained Bayesian selection problem as an instance of the laminar Bayesian selection, then the depth of the corresponding laminar family will simply be $T+1$, and hence is not a constant. Yet, as we saw in \cref{sec:production-simple}, our approach in that section could yield to a PTAS. Can we still see this PTAS as a special case of our PTAS for the constant-depth laminar?

This discrepancy can easily be explained by extending our result for the constant depth laminar Bayesian selection to a generalization where every element is replaced by a \emph{sub-problem}. With this view, in the production constrained Bayesian selection we indeed have only a 1-level tree (connecting the root to the type-specific sub-problems), and each leaf of this 1-level tree plays the role of one of the type-specific sub-problems. To make the analysis work, we require two things from each sub-problem. First, a local optimum online policy for each sub-problem, for possibly atomic or even non-positive value distributions, should be computable in polynomial time. Second, the selection rule of this local optimum online policy should satisfy the required negative dependency (i.e., that the probability of accepting an element decreases as more elements are being accepted in the past, which we also referred to as the super-martingale property in \Cref{lemma:multiplicative-azuma}). Having these two properties, the same proof as in this section can be used, as by replacing each element with a group of negatively dependent elements (in the same sense as in \Cref{lemma:multiplicative-azuma}) we still have the required concentration. The proof of this generalization is basically the same, so we omit this proof to avoid redundancy.

\section{Conclusion}
\label{sec:conclusion}
In this paper we took the first stab at designing polynomial time approximation schemes for Bayesian online selection problems. In this model, the goal is to serve a subset of arriving customers in a way that maximizes the expected social welfare while respecting certain capacity or structural constraints. We presented two polynomial time approximation schemes when the set of allowable customers is restricted either by a laminar family with constant depth or by joint production/shipping constraints. Our algorithms are based on rounding the solution of a hierarchy of linear programming relaxations that approximate the optimum solution within any degrees of accuracy. We hope that benchmarks similar to the type of linear programming hierarchy that we proposed here can lead to more insights as well as new and interesting algorithms for this class of stochastic online optimization problems (or even beyond).

%


%
 \begin{APPENDIX}{}
 \section{Re-inventing the Wheel: Prophet Inequalities Using LP}
 \label{sec:lp-prophet}
The benefits of our linear programming approach for the online Bayesian selection problem are two-fold. So far, we have seen our LPs give us a systematic way of describing the optimum online benchmark and its relaxations, which can be easily generalized to other combinatorial domains (e.g. matroids). In this section, we show some further applications of this approach in the classic single-item prophet inequality problem (defined formally in \Cref{sec:prophet}).  We show how to use this LP to design approximate pricing mechanisms with respect to the optimal offline in a \emph{modular way}, and therefore re-deriving simpler proofs for a couple of already existing prophet inequalities. We believe this approach can be useful for other settings as well, which we leave as future research directions. For the ease of exposition, we focus on non-atomic distributions. The case of general distributions can be easily handled by adding randomized tie-breaking to our mechanisms in a straightforward fashion.\footnote{\revcolor{A preliminary conference version of some of the results in this appendix section had appeared in \cite{niazadeh2018prophet}; in this section we expand on those results and provide all the technical details.}}

\subsection{Classic single-item Prophet inequality problem}
\label{sec:prophet}
In single item prophet inequality problem, a seller is interested in selling an item to a sequence of $n$  arriving buyers. Each buyer $i$ has a value $v_i$ for the item. This value is independently drawn from a distribution $\Fi{i}$. Buyers arrive one by one and reveal their values. Upon the arrival of a buyer, the seller decides whether to sell the item or move on to the next buyer. The goal is to maximize the expected value of the selected buyer. We consider the setting where the sequence of distributions $\Fi{1},\Fi{2},\ldots,\Fi{n}$ is picked by an oblivious adversary up front. We assume the seller knows the distributions in advance but does not know the order in which the buyers arrive  (an important distinction with previous sections of our paper). 

In contrast to previous sections, in which the goal was to design policies that approximate the optimum online benchmark, here we focus on obtaining prophet inequalities, where the goal is to design online policies that are competitive with respect to the optimum offline (or omniscient prophet) benchmark. In the single-item problem, this benchmark is simply equal to $\Ex{\underset{i\in[n]}{\max}~{v_i}}$.

\subsection{LP characterization of the optimum online benchmark} 

Let $[n]$ be a sequence of buyers arriving over times. Without loss of generality assume that at each time $t=1,\ldots,n$ buyer $t$ arrives. Let $\opton([n])$ be the optimum online mechanism given the sequence of arriving buyers $[n]$. We seek to find a linear programming characterization for $\opton([n])$. Note that using a simple backward induction one can find such an optimum policy; However, the introduced LP sheds more insight on the structure of this policy and helps us with designing approximate policies with respect to the optimum offline (i.e. prophet inequalities).

To write down the linear program for the single item prophet inequality problem, similar to (\ref{eq:lp-optimal}), let $\x{t}(v)$ denote the probability that the policy allocates the item to buyer $t$ conditioned on the event that the value of this buyer is equal to $v$. Our linear programming has variables $\x{t}(v)$ for every $t\in [n]$ and every $v\in \textrm{supp}(\Fi{t})$, where $\textrm{supp}(.)$ denotes the support of its input distribution. \footnote{\revcolor{In the case of non-atomic distributions, this LP is essentially a continuous program with uncountably many variables. In the case of discrete distributions, the LP has countably many variables, and if the support is bounded the LP has finitely many variables.}} We then try to impose constraints on these variables to guarantee that the solution of the LP is  online implementable, without losing anything in the expected allocated value. Formally speaking, consider the following linear program, which we denote by $\lpon([n])$:
\begin{align*}
\raisetag{85pt}
\label{eq:lp-opt-on}
\begin{array}{ll@{}ll}
\text{maximize}  & \displaystyle \sum_{t=1}^{n}\Exv{v_{t}\sim\Fi{t}}{v_{t}\cdot \x{t}(v_{t})}  & &\tag{\texttt{LP-ONLINE}($[n]$)}\\
\text{subject to}& \displaystyle \x{t}(v)\leq 1-\sum_{t'<t}\Exv{v_{t'}\sim\Fi{t'}}{\x{t'}(v_{t'})} ,&~~~~~~~\forall v\in \textrm{supp}(\Fi{t}),~t=2,\ldots,n\\
& \x{1}(v)\leq 1, &~~~~~~~\forall v\in \textrm{supp}(\Fi{1})\\
& \x{t}(v)\geq 0, &~~~~~~~\forall v\in \textrm{supp}(\Fi{t}),~t=1,\ldots,n
\end{array}
\end{align*}

It is not hard to see that every feasible online policy induces a feasible solution for \ref{eq:lp-opt-on}, by setting $\x{t}(v)$ to be the allocation probabilities of this policy. In fact, no allocation happens at time $t$ if the item has been allocated at some time $t'<t$. Therefore, by taking an expectation with respect to the buyer values arriving at times $t'=1,\ldots, t-1$, $\x{t}(v)$ will be at most equal to $1-\sum_{t'<t}\Exv{v_{t'}\sim\Fi{t'}}{\x{t'}(v_{t'})}$. More interestingly, the converse is also true. While we can prove the converse by applying \Cref{prop:adaptive-pricing} and slightly modifying the LP, we can also prove it directly. In order to be self-contained, we provide the direct proof here.

\begin{proposition}
\label{prop:rounding-exact}
Given any feasible assignment $\{\x{t}(v)\}$ for \ref{eq:lp-opt-on}, there exists a feasible online policy with an expected allocated value equal to the objective value of the LP under $\{\x{t}(v)\}$.
\end{proposition}
\begin{proof}{Proof.} Define $q_t\triangleq 1-\sum_{t'<t}\Exv{v_{t'}\sim\Fi{t'}}{\x{t'}(v_{t'})}$ for $t\geq 2$, and $q_1\triangleq 1$. Consider the following randomized rounding policy: at time $t\geq 1$, if the item has already been allocated do nothing. If it has not yet been allocated,  upon realizing the value $v_{t}$ flip an independent coin with heads probability  of $\frac{\x{t}(v_{t})}{q_t}$. Now, if the coin flips heads allocate the item and terminate. Otherwise, continue to the next buyer. 

Clearly the above policy is online and feasible, i.e. it sells the item to only one buyer. To compare its expected allocated value with the objective value of the LP under the assignment $\{\x{t}(v)\}$, we first claim that $q_t$ is equal to the probability that this randomized policy reaches time $t$, i.e. with probability $q_t$ the policy does not sell the item to any buyer arriving before time $t$. We prove this claim by induction. Clearly $q_1=1$ satisfies this property. As the induction hypothesis, suppose the policy reaches time $t\geq 2$ with probability $q_t$. To prove the inductive step, we have:
\begin{align*}
\Pr{\textrm{reaching time $t+1$}}&=\Pr{\left( \textrm{reaching time $t$} \right)\&\left(\textrm{no allocation at time $t$}\right)}\\
&=\Pr{\textrm{no allocation at time $t$}|\textrm{reaching time $t$}}\cdot q_t\\
&=\Exv{v_{t}\sim\Fi{t}}{\Pr{\textrm{no allocation at time $t$}|\left(\textrm{reaching time $t$}\right)~\&~v_{t}}}\cdot q_t \\
&=\Exv{v_{t}\sim\Fi{t}}{1-\frac{\x{t}(v_{t})}{q_t}}\cdot q_t=q_t-\Exv{v_{t}\sim\Fi{t}}{\x{t}(v_{t})}=q_{t+1}
\end{align*}
Next we claim that by conditioning the realized value at time $t$ to be $v$, the policy allocates the item with probability $\x{t}(v)$. This is simply true because the policy reaches time $t$ with probability $q_t$, and then conditioned on reaching time $t$ and realizing value $v$ allocates the item with probability $\frac{\x{t}(v)}{q_t}$. Finally, as $\{\x{t}(v)\}$ are the allocation probabilities of the policy (as we just proved), the expected allocated value at time $t$ is equal to $\Exv{v_{t}\sim\Fi{t}}{v_{t}\cdot \x{t}(v_{t})}$. The proof of the proposition is then finished by summing over all $t$.
\qed
\end{proof}

\subsection{Relaxation and rounding}
The goal of this section is to propose two sequential pricing policies, one for the case of non-identical distributions and one for the case of identical distributions, so that they obtain $\tfrac{1}{2}$ and $1-\tfrac{1}{e}$ fractions of the expected value of the omniscient prophet benchmark, respectively. To this end, we use \ref{eq:lp-opt-on} and the rounding algorithm proposed in \Cref{prop:rounding-exact}, and in a modular fashion design new algorithms satisfying the classic prophet inequality of \cite{krengel1978semiamarts} and the semi-optimal prophet inequality of \cite{correa2017posted} and \cite{esfandiari2017prophet}.

Our approach is based on the expected relaxation of \ref{eq:lp-opt-on}. Suppose the seller intends to sell the item, but rather than selling the item to only one buyer for every profile of buyer values, it has a relaxed constraint of selling the item to one person in expectation over buyer values. In the expected relaxation benchmark the seller only needs to sell the item to each buyer $t$ with probability $q_t$, where $\sum_t q_t\leq 1$. Clearly, the maximum expected value of this relaxation is an upper-bound on the expected value of the optimum offline mechanisms, as the omniscient prophet allocates the item to only one buyer point-wise. Moreover, the 
following linear program captures the expected relaxation:
\begin{align*}
\raisetag{70pt}
\label{eq:lp-ex-ante}
\begin{array}{ll@{}ll}
\text{maximize}  & \displaystyle \sum_{t=1}^{n}\Exv{v_{t}\sim\Fi{t}}{v_{t}\cdot \x{t}(v_{t})}  & &\tag{\texttt{Expected-LP}}\\
\text{subject to}& \displaystyle \sum_{t\in[n]}\Exv{v_{t}\sim\Fi{t}}{\x{t}(v_{t})} \leq 1,& \\
& \x{t}(v)\geq 0, &~~~~~~~\forall v\in \textrm{supp}(\Fi{t}),~t=1,\ldots,n
\end{array}
\end{align*}
Fix a feasible assignment $\{\x{t}(v)\}$ for the above LP, and let $q_t=\Exv{v_{t}\sim\Fi{t}}{\x{t}(v_{t})}$. Note that $\sum_t q_t\leq 1$. Now, one can replace $\{\x{t}(v)\}$ with the following assignment, which obtains at least as much expected value as before and is a feasible assignment for \ref{eq:lp-ex-ante}:
\begin{equation*}
\x{t}'(v)=
  \begin{cases}
       1 &\quad\quad\quad\quad   v\geq T_t(q_t),\\
       0 &\quad\quad\quad\quad\quad  \textrm{o.w.}
     \end{cases}
\end{equation*}

where $T_t(q_t)$ is the price corresponding to the quantile $q_t$ of the distribution $\Fi{t}$. More precisely, we define $T(q)\triangleq \underset{p\in \mathbb{R}}{\textrm{argmin}}~\mathbb{P}_{v\sim\mathcal{F}}\left[{v\geq p}\right]=q$. Under the pricing allocations $\{\x{t}'(v)\}$, the objective value of the expected LP is equal to $\sum_{t=1}^n V_t(q_t)$, where $V_t(q_t)$ is the concave value-curve of the distribution $\Fi{t}$. More precisely, we define $V(q)\triangleq \Exv{v\sim\mathcal{F}}{v\cdot \mathds{1}\{v \geq T(q)\}}$. By putting all the pieces together, the optimal solution to \ref{eq:lp-ex-ante} is $\x{t}^*(v)=\mathds{1}\{v\geq T_t(q^*_t)\}$, where $\mathbf{q}^*$ is the optimal solution of the following convex program:
\begin{align*}
\raisetag{70pt}
\label{eq:con-ex-ante}
\begin{array}{ll@{}ll}
\text{maximize}  & \displaystyle \sum_{t=1}^{n}V_t(q_t)& &\tag{\texttt{Expected-Conv}}\\
\text{subject to}& \displaystyle \sum_{t\in[n]}q_t\leq 1~,~q_t\geq 0&~~~~~~~t=1,\ldots,n
\end{array}
\end{align*}

\begin{remark} Note that the optimal solution of the expected LP relaxation (\ref{eq:lp-ex-ante}) can be computed by only knowing the set of distributions $\{\Fi{t}\}_{t\in [n]}$, and without the need to know the ordering in which the buyers arrive. In other words, this benchmark, similar to optimum offline, is \emph{order oblivious}; no matter what the ordering of the arriving buyers is, the expected LP relaxation yields the same solution. 
\end{remark}

\paragraph{Rounding for the non-identical distributions.}
We now start with the optimal solution of the expected LP relaxation described above, i.e. $\{\x{t}^*(v)\}$, and modify it so that it becomes online implementable. In a nutshell, consider $\x{t}(v)=\tfrac{1}{2}\x{t}^*(v)$. We show this solution is feasible for \ref{eq:lp-opt-on}.
\begin{proposition}The expected value obtained by Algorithm~\ref{alg:non-identical} is at least $\tfrac{1}{2}$ of optimum offline.
\end{proposition} 
\begin{proof}{Proof.}
Suppose $\x{t}^*(v)=\mathds{1}\{v\geq T_t(q^*_t)\}$ is the optimal assignment of (\ref{eq:lp-ex-ante}), where $\mathbf{q}^*$ is the optimal solution of the convex program \ref{eq:con-ex-ante}. Consider $\x{t}(v)=\tfrac{1}{2}\x{t}^*(v)$. We have:
\begin{equation*}
\x{t}(v)=\tfrac{1}{2}\x{t}^*(v)\overset{(1)}{\leq} \tfrac{1}{2}\overset{(2)}{\leq} 1-\tfrac{1}{2}\sum_{t'<t}q^*_{t'}\overset{(3)}{=}1-\Exv{v_{t'}\sim\Fi{t'}}{\x{t'}(v_{t'})}
\end{equation*}
where inequality~(1) holds as $\x{t}^*(v)\leq 1$, inequality~(2) holds as $\sum_{t'<t}q^*_{t'}\leq \sum_{t'}q^*_{t'}\leq 1$, and equality~(3) holds as $\Exv{v_{t'}\sim\Fi{t'}}{\x{t'}(v_{t'})}=\frac{1}{2}\Exv{v_{t'}\sim\Fi{t'}}{\x{t'}^*(v_{t'})}=\frac{1}{2}q^*_{t'}$. Therefore, $\{\x{t}(v)\}$ forms a feasible assignment for the \ref{eq:lp-opt-on}. By applying \Cref{prop:rounding-exact}, there exists a feasible randomized policy that implements $\{\x{t}(v)\}$; this policy obtains exactly the same allocation probabilities as $\{\x{t}(v)\}$ and obtains an expected value equal to the objective value of \ref{eq:lp-opt-on} for this assignment. Clearly, this objective value is at least $\frac{1}{2}$ of the expected value of the optimum offline, as the optimal value of the expected LP relaxation is an upper-bound on the expected maximum value. Finally, note that the randomized policy implementing $\x{t}(v)$, described in the proof of \Cref{prop:rounding-exact}, is exactly equivalent to Algorithm~\ref{alg:non-identical}.
\qed
\end{proof}


\begin{algorithm}[H]
\small
\caption{Online policy for non-identical distributions}
\begin{algorithmic}[1]
\State \textbf{input}: Distributions $\{\Fi{1},\ldots,\Fi{n}\}$
\State  Compute the optimal solution of (\ref{eq:con-ex-ante}). Let $\mathbf{q}^*$ be this optimal solution.
 $t\leftarrow 0$
\State \textbf{While} $\left(\left[~\textrm{item is not allocated}~\right]~\&~\left[ t\leq n \right]\right)$~\textbf{do}
 \State \quad\quad$t\leftarrow t+1$.
 \State \quad \quad Post a price $p_t\triangleq T_t(q^*_t)$.
 \State \quad \quad \textbf{If}~price $p_t$ gets accepted ($v_t\geq p_t$),~\textbf{then}
 \State  \quad\quad \quad\quad Allocate the item with probability $\displaystyle\frac{1}{2-\sum_{t'<t}q^*_{t'}}$.
\State \textbf{End}
\end{algorithmic}
\label{alg:non-identical}
\end{algorithm}
%
%
%
%
%
%
%
%
%
\paragraph{Rounding for the identical distributions.}
Can we round the optimal solution of the expected LP for the case of identical distributions, and obtain the improved bound of $1-\tfrac{1}{e}$, or even the optimal bound in \cite{correa2017posted}? Interestingly, by incorporating a careful rounding of the expected LP and using the LP of optimum online, we can obtain a mechanism which is posting the single price of $T(\tfrac{1}{n})$ and show that it achieves at least $1-\tfrac{1}{e}$ fraction of the optimum offline (hence an alternative proof for a similar result in \cite{correa2017posted} and \cite{esfandiari2017prophet}). 

\begin{proof}{Proof.}
Due to the symmetry, the optimal solution of \ref{eq:con-ex-ante} is attained at $q^*_i=\tfrac{1}{n}$, and hence $\x{t}^*(v)=\mathds{1}\{v\geq T(\tfrac{1}{n})\}$. Let $\gamma\triangleq 1-\tfrac{1}{n}$, and consider the solution $\x{t}(v)=\gamma^t\cdot \x{t}^*(v)$. Note that $\Exv{v\sim\Fi{}}{\x{t}^*(v)}=\tfrac{1}{n}=1-\gamma$ for all $t$. Moreover, we have $\x{t}(v)=\gamma^t\cdot \x{t}^*(v)\leq \gamma^t$, simply because $\x{t}^*(v)\leq 1$. Therefore, 
\begin{equation*}
1-\sum_{t'<t}\Exv{v_{t'}\sim\Fi{}}{\x{t'}(v_{t'})}=1-\sum_{t'<t}\gamma^{t'}\cdot\Exv{v_{t'}\sim\Fi{}}{\x{t'}^*(v_{t'})}=1-\frac{1}{n}\frac{1-\gamma^t}{1-\gamma}=\gamma^t
\end{equation*}
where in the last equality we used $\gamma=1-\tfrac{1}{n}$. So, $\x{t}(v)\leq 1-\sum_{t'<t}\Exv{v_{t'}\sim\Fi{}}{\x{t'}(v_{t'})}$, and hence forms a feasible solution to \ref{eq:lp-opt-on} for any $\pi$. \Cref{prop:rounding-exact} suggests that there exists a randomized policy that implements this feasible assignment. In fact, similar to the proof of \Cref{prop:rounding-exact}, the final policy should post the price $T(\tfrac{1}{n})$, and if $v\geq T(\tfrac{1}{n})$ should accept it with probability:
\begin{equation*}
\displaystyle\frac{\gamma^t}{1-\sum_{t'<t}\Exv{v_{t'}\sim\Fi{}}{\x{t'}(v_{t'})}}=\frac{\gamma^t}{1-\frac{1}{n}\sum_{t'<t}\gamma^{t'}}=\frac{\gamma^t}{1-\frac{1}{n}\frac{1-\gamma^t}{1-\gamma}}=1,
\end{equation*}
where the last equality again holds because $\gamma=1-\tfrac{1}{n}$. So, the exact rounding policy simply suggests posting the single price $T(\tfrac{1}{n})$. To compare the expected value of this policy with that of the optimum offline, we only need to compare the objective value of \ref{eq:lp-opt-on} at this solution with the objective value of expected LP, thanks to \Cref{prop:rounding-exact}. We have:
\begin{equation*}
\displaystyle \sum_{t=1}^{n}\Exv{v_{t}\sim\Fi{}}{v_{t}\cdot \x{t}(v_{t})}= \sum_{t=1}^{n}\gamma^t\cdot\Exv{v_{t}\sim\Fi{}}{v_{t}\cdot \mathds{1}\{v_t\geq T(\frac{1}{n})\}}=V(\frac{1}{n})\sum_{t=1}^n \gamma^t=V(\frac{1}{n})\frac{1-\gamma^{n+1}}{1-\gamma}~,
\end{equation*}
where the right-hand-side is equal to $n\cdot V(\tfrac{1}{n})\cdot (1-(1-\tfrac{1}{n})^{n+1})$ as $\gamma=1-\tfrac{1}{n}$. Finally, the optimal objective value of expected LP is equal to $n\cdot V(\tfrac{1}{n})$ and $(1-(1-\tfrac{1}{n})^{n+1})\geq 1-\frac{1}{e}$, which completes the proof.
\qed
\end{proof}

 \end{APPENDIX}
%
%





\bibliographystyle{informs2014}
\bibliography{refs}

\end{document}